\numberwithin{equation}{section}
\numberwithin{figure}{section}
\theoremstyle{plain}
\newtheorem*{thm*}{Theorem}
\newtheorem{thm}{Theorem}[section]
\newtheorem{lem}[thm]{Lemma}
\newtheorem{prop}[thm]{Proposition}
\theoremstyle{definition}
\newtheorem{defn}[thm]{Definition}
\newtheorem*{defn*}{Definition}
\crefname{lemma}{lemma}{lemmas}
\Crefname{lemma}{Lemma}{Lemmas}
\crefname{thm}{theorem}{theorems}
\Crefname{thm}{Theorem}{Theorems}
\crefname{defn}{definition}{definitions}
\Crefname{defn}{Definition}{Definitions}
\DeclarePairedDelimiterX{\abs}[1]{\lvert}{\rvert}{\ifblank{#1}{{}\cdot{}}{#1}}
\newcommand{\calo}{\mathcal{O}}
\newcommand{\cals}{\mathcal{S}}
\newcommand{\calh}{\mathcal{H}}
\newtheorem*{thm:main}{Theorem \ref{thm:main}}
\newtheorem*{thm:prop}{Proposition \ref{thm:prop}}
\begin{document}

\begin{titlepage}
% Report number
\vspace*{-3cm} 
\begin{flushright}
{\tt CALT-TH-2021-042}\\
\end{flushright}
\begin{center}
\vspace{2.2cm}
{\LARGE\bfseries Nonperturbative gravity corrections\\ to bulk reconstruction \\}
%{\Large\bfseries Approximation and state dependence in infinite-dimensional entanglement wedge reconstruction \\  }
\vspace{1.8cm}
{\large
Elliott Gesteau$^1$ and Monica Jinwoo Kang$^{1,2,3}$\\}
\vspace{.6cm}
{ $^1$ Division of Physics, Mathematics, and Astronomy, California Institute of Technology}\par\vspace{-.3cm}
{Pasadena, CA 91125, U.S.A.}\par
\vspace{.2cm}
{ $^2$ Walter Burke Institute for Theoretical Physics, California Institute of Technology}\par\vspace{-.3cm}
{Pasadena, CA 91125, U.S.A.}\par
\vspace{.2cm}
{ $^3$ Department of Physics, Korea Advanced Institute of Science and Technology}\par
{Daejeon 34141, Republic of Korea}\par
\vspace{.4cm}

\scalebox{.95}{\tt  egesteau@caltech.edu, monica@caltech.edu}\par
\vspace{1cm}
{\bf{Abstract}}\\
\end{center}
We introduce a new algebraic framework for understanding nonperturbative gravitational aspects of bulk reconstruction with a finite or infinite-dimensional boundary Hilbert space.
We use relative entropy equivalence between bulk and boundary with an inclusion of nonperturbative gravitational errors, which give rise to approximate recovery. We utilize the privacy/correctability correspondence to prove that the reconstruction wedge, the intersection of all entanglement wedges in pure and mixed states, manifestly satisfies bulk reconstruction. We explicitly demonstrate that local operators in the reconstruction wedge of a given boundary region can be recovered in a state-independent way for arbitrarily large code subspaces, up to nonperturbative errors in $G_N$.
We further discuss state-dependent recovery beyond the reconstruction wedge and the use of the twirled Petz map as a universal recovery channel. We discuss our setup in the context of quantum islands and the information paradox.
\\
\vfill 
%{ \today\   at \ \currenttime\par}
\end{titlepage}

\tableofcontents
\newpage
\section{Introduction}

Understanding the holographic principle and the theory of quantum gravity has been a constant focus and desire in modern theoretical physics, and much progress has been achieved within the last decade utilizing the framework of quantum error correction. In the context of the AdS/CFT correspondence, the semi-classical limit can be interpreted as a quantum error correcting code and this perspective provided new tools to understand the emergence of spacetime \cite{Dong:2016eik,Harlow:2018fse,Gesteau:2020wrk} and the Page curve analysis \cite{Page:1993df,Almheiri:2019hni,Almheiri:2019psf,Penington:2019npb}.

The first hint of the error correcting structure of the semiclassical limit of AdS/CFT lies within the Ryu--Takayanagi formula \cite{Ryu:2006bv,Ryu:2006ef}. More precisely, each region $A$ of the boundary CFT in a given state that possesses a semiclassical dual corresponds to a bulk region called the \textit{entanglement wedge}, whose geometry and field content are encoded solely in $A$. Given the bulk geometry, the entanglement wedge for a given boundary region is determined through a minimization procedure: it is delimited by a \textit{quantum extremal surface} homologous to the boundary region, and has the property of being an extremum of the generalized entropy of the boundary state. More formally, it is the bulk surface ($a$) homologous to the boundary region that maximizes the quantity
\begin{equation}
S_{gen}:=\frac{1}{4G_N}\text{Area}(a)+S_a(\rho_{bulk}),
\end{equation}
where $\rho_{bulk}$ is the state that describes the system in the semiclassical theory (i.e. the code subspace).

It is important to note that the entanglement wedge can be much larger than the causal wedge, which is the bulk region that is reconstructible through the  Hamilton--Kabat--Lifschytz--Lowe protocol \cite{Hamilton:2005ju,Hamilton:2006az}. For example, this is possible in the case of disconnected subregions of the boundary. In this aspect, entanglement wedge reconstruction is a highly nontrivial and unexpected property of the semiclassical limit of AdS/CFT.

The validity of the Ryu--Takayanagi formula is equivalent to several other claims including complementary recovery \cite{Dong:2016eik}. If the boundary theory is in a pure state and $a$ is the entanglement wedge of a boundary region $A$, then semiclassical operators in $a$ can be reconstructed from operators in $A$ and semiclassical operators in the bulk complement $\overline{a}$ can be reconstructed from the boundary complement $\overline{A}$. This is the key ingredient needed for establishing the connection between entanglement wedge reconstruction and the Ryu--Takayanagi formula in the context of finite-dimensional Hilbert spaces. In this setting, together with the relative entropy equivalence between bulk and boundary derived by Jafferis--Lewkowycz--Maldacena--Suh (JLMS) \cite{Jafferis:2015del}, it has been proven that entanglement wedge reconstruction, the Ryu--Takayanagi formula, and the JLMS formula are all equivalent \cite{Harlow:2016vwg}.

Formulating a rigorous analog of the Ryu--Takayanagi formula in the context of infinite-dimensional Hilbert spaces is hard, as it needs to be regulated, and the absence of tracial states on certain von Neumann algebras makes it particularly tricky to even define von Neumann entropy. However, its equivalent counterparts, namely, complementary recovery and the conservation of relative entropy (and modular flow) between the bulk and the boundary, can still be formulated in the context of infinite-dimensional Hilbert spaces, and are shown to be equivalent under some assumptions in \cite{Kang:2018xqy}. The required assumptions for this exact relation to hold have been extended to more settings in \cite{Gesteau:2020rtg,Faulkner:2020hzi}. It is important to note that this exact relation between entanglement wedge reconstruction and the relative entropy equivalence between bulk and boundary in infinite-dimensional Hilbert spaces is relevant for the case of an actual boundary conformal field theory, utilizing an operator-algebraic perspective.

The analysis in \cite{Kang:2018xqy} relies on von Neumann algebras acting on Hilbert spaces. The code subspace and the physical Hilbert space are embedded into one another via an isometry. The fundamental objects of the theory then become the algebra of CFT observables in a certain boundary region $M_{phys}$ acting on the physical Hilbert space, and an algebra of effective field theory observables $M_{code}$ acting on the code subspace. It has recently been shown that the underlying structure associated to exact entanglement wedge reconstruction was that of a net of conditional expectations between boundary and bulk local algebras \cite{Faulkner:2020hzi}. In this context, the equivalence between the conservation of modular flows and complementary recovery for a state in the code subspace naturally arises as a consequence of Takesaki's theorem.

A striking feature of the entanglement wedge is that it is a state-dependent object. This is due to the presence of the bulk entropy term in the quantum extremal surface formula. In the case of two almost equally contributing local extrema of the generalized entropy \cite{Akers:2019wxj}, or of a large amount of semiclassical entanglement in the bulk \cite{Hayden:2018khn}, this bulk entropy term may become dominant and create large variations of the entanglement wedge, even within the same code subspace. This can happen in situations when the bulk contains a black hole such that there exist enough gravitons giving rise to such entanglement wedge jumps. In such cases, the JLMS formula \cite{Jafferis:2015del} has to be corrected to include the difference in generalized entropies of the two bulk states in their relative entanglement wedges \cite{Dong:2016eik}.

There is an important subtlety in studying some crucial features of the emergence of the bulk in AdS/CFT: the quantum extremal surface formula can only be exact approximately, albeit to nonperturbative order in $G_N$ \cite{Hayden:2018khn}. This subtlety is essential for resolving some apparent paradoxes in AdS/CFT, such as aspects of the information paradox \cite{Penington:2019npb} and non-additivity of entanglement wedges \cite{Kelly:2016edc}. For studying black holes in the $G_N\rightarrow 0$ limit, we allow the code subspace to have an arbitrarily large dimension. In this setting, given a boundary region, a code subspace operator can only be reconstructed in a state-independent manner if it is in its entanglement wedge for both pure and mixed states in the code subspace \cite{Hayden:2018khn}. The intersection of all these entanglement wedges is called the \textit{reconstruction wedge}. In particular, the black hole interior, which lies outside of the reconstruction wedge, can only be reconstructed in a state-dependent manner.

Subtleties about approximate recovery that include non-perturbative gravitational errors have thus far only been considered in finite-dimensional toy models; however, we eventually need to deal with the case of local algebras on an infinite-dimensional boundary Hilbert space. Here, we intend to formulate an approximate version of infinite-dimensional holographic recovery. We first focus on the case of the reconstruction wedge. More specifically, we prove that if a quantum extremal surface formula with an inclusion of non-pertubative gravitational errors can be true in all entanglement wedges, then the bulk algebra of the reconstruction wedge is reconstructable from the boundary up to bounded small nonperturbative gravitational corrections. Putting everything together, we present Theorem \ref{thm:main} that captures the precise result. 
We note that we use \emph{modified-JLMS formula} to depict relative entropy conservation together with generalized entropy terms to deal with jumps in quantum extremal surfaces.

\begin{thm}
Let $\mathcal{H}_{code}$ and $\mathcal{H}_{phys}$ be two Hilbert spaces, $V:\mathcal{H}_{code}\rightarrow \mathcal{H}_{phys}$ be an isometry, and $\mathcal{H}^\ast_{code}$ be any finite-dimensional Hilbert space of dimension smaller or equal to the one of $\mathcal{H}_{code}$. Let $M_{A}$ be a von Neumann algebra on $\mathcal{H}_{phys}$. To each normal state $\omega$ in $\mathcal{B}(\mathcal{H}_{code}\otimes\mathcal{H}^\ast_{code})_\ast$, we associate two entanglement wedge von Neumann algebras $M_{EW(\omega, A)}$ and $M_{EW(\omega, \overline{A}\cup R)}$ of operators on $\mathcal{H}_{code}\otimes\mathcal{H}_{code}^\ast$, such that $M_{EW(\omega, A)}\subset\mathcal{B}(\mathcal{H}_{code})\otimes Id$ and $M_{EW(\omega, \overline{A}\cup R)}\subset M_{EW(\omega, A)}^\prime$. Let
\begin{align*}
    M_{a}:=\underset{\omega}{\bigcap}\;M_{EW(\omega, A)}
\end{align*} 
be the reconstruction wedge von Neumann algebra on $\mathcal{H}_{code}$, and suppose that $M_{a^\prime}$, the relative commutant of $M_a$ in $\mathcal{B}(\mathcal{H}_{code})\otimes Id$, is a product of type $I$ factors. Suppose that for all choices of $\mathcal{H}^\ast_{code}$ and all pairs of states $\rho,\omega$ in $\mathcal{B}(\mathcal{H}_{code}\otimes\mathcal{H}^\ast_{code})_\ast$ such that $S_{\rho,\omega}(M_{EW(\omega, \overline{A}\cup R)})$ is finite, we have the following modified-JLMS condition: 
\begin{align*}
|S_{\rho\circ (\mathcal{E}^c\otimes Id),\omega\circ (\mathcal{E}^c\otimes Id)}(M_A^\prime\otimes\mathcal{B}(\mathcal{H}_{code}^\ast))-S_{\rho,\omega}(M_{EW(\omega, \overline{A}\cup R)})&\\
+S_{gen}(\rho,EW(\rho, \overline{A}\cup R))-S_{gen}(\rho,EW(\omega, \overline{A}\cup R))|&\ \leq\ \varepsilon,
\end{align*}
where $\mathcal{E}$ and $\mathcal{E}^c$ refer to the respective restrictions of $A\mapsto V^\dagger A V$ to $M_A$ and $M^\prime_A$, and the function $S_{gen}(\rho,EW(\omega, \overline{A}\cup R))$ depends only on the restrictions of $\rho$ and $\omega$ to $M_{a^\prime}\otimes\mathcal{B}(\mathcal{H}_{code}^\ast)$.
Then, there exists a quantum channel $\mathcal{R}:M_{a}\rightarrow M_{A}$ such that
\begin{align*}
    \|\mathcal{E}\circ\mathcal{R}-Id_{M_{a}}\|_{cb} \leq 2(2\varepsilon)^\frac{1}{4}.
\end{align*}
For finite-dimensional $\mathcal{H}_{code}$, we find a bound of $2\sqrt{2\sqrt{2\varepsilon}}$ instead.
\label{thm:main}
\end{thm}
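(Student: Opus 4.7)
The plan is to combine an approximate version of the privacy/correctability correspondence with an operator-algebraic Petz recovery theorem in the spirit of Fawzi--Renner and its von Neumann algebra extensions (Junge--Renner--Sutter--Wilde--Winter, Faulkner et al.). The modified-JLMS hypothesis is precisely the approximate preservation of Araki relative entropy needed as input to such a theorem, once the extra generalized-entropy correction has been absorbed into the type-$I$ classical structure of $M_{a'}$. The intersection definition $M_a=\bigcap_\omega M_{EW(\omega,A)}$, together with the hypothesis that $M_{a'}$ is a direct sum of type-$I$ factors, will then allow us to upgrade a family of state-dependent recoveries into a single state-independent channel $\mathcal{R}$.

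First I would fix a reference state $\omega$ and a test state $\rho$ and rewrite the modified-JLMS inequality. Because $S_{gen}(\rho,EW(\rho,\overline{A}\cup R))-S_{gen}(\rho,EW(\omega,\overline{A}\cup R))$ depends only on the restriction of $\rho,\omega$ to $M_{a'}\otimes\mathcal{B}(\mathcal{H}_{code}^\ast)$, it can be absorbed as a classical shift along the centre of the type-$I$ decomposition of $M_{a'}$ and reinterpreted as a direct-integral reweighting of the bulk Araki relative entropy on $M_{EW(\omega,\overline{A}\cup R)}$. What survives is a clean $\varepsilon$-approximate conservation of relative entropy under $\mathcal{E}^c\otimes Id$ between $M_A'\otimes\mathcal{B}(\mathcal{H}_{code}^\ast)$ and $M_{EW(\omega,\overline{A}\cup R)}$. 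Feeding this into the von Neumann algebra Fawzi--Renner/JRSWW recovery theorem produces, for each $\omega$, a channel $\mathcal{R}_\omega^c$ approximately inverting $\mathcal{E}^c\otimes Id$ on $M_{EW(\omega,\overline{A}\cup R)}$ with fidelity error $O(\sqrt{\varepsilon})$. Invoking complementary recovery duality dualises this into a channel $\mathcal{R}_\omega:M_{EW(\omega,A)}\to M_A$ with $\mathcal{E}\circ\mathcal{R}_\omega$ close to $Id_{M_{EW(\omega,A)}}$.

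To obtain a single universal channel I would restrict every $\mathcal{R}_\omega$ to $M_a\subseteq M_{EW(\omega,A)}$ and use the type-$I$ decomposition of $M_{a'}$, together with a twirled Petz-map construction, to assemble a canonical $\omega$-independent $\mathcal{R}:M_a\to M_A$. The final cb-norm bound comes from propagating the Araki-relative-entropy error first through a Pinsker-type inequality (one square root, yielding an $L^1$/fidelity loss) and then through the conversion from fidelity to cb-distance via Stinespring dilation (a second square root), giving the stated $2(2\varepsilon)^{1/4}$; in finite dimensions an intermediate Fuchs--van de Graaf step produces the slightly different prefactor in $2\sqrt{2\sqrt{2\varepsilon}}$.

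The main obstacle is arguing that the recovery channel, once restricted to $M_a$, is genuinely $\omega$-independent. This is precisely where the type-$I$ structure of $M_{a'}$ and the uniform quantification over all pairs $(\rho,\omega)$ in the modified-JLMS hypothesis are both essential: the former provides a canonical classical--quantum splitting of $M_{EW(\omega,\overline{A}\cup R)}$ relative to $M_a$, and the latter ensures the Petz maps $\mathcal{R}_\omega$ can be averaged against the modular flow on $M_{a'}$ to produce a single channel whose error on $M_a$ no longer depends on the particular choice of reference state. Handling these state-dependent wedge jumps via the generalized-entropy terms without losing the quarter-power bound is the delicate part of the argument.
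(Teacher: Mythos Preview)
Your proposal has a genuine gap, and it diverges from the paper's argument at the very first move.

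The paper does \emph{not} build a family of state-dependent recovery channels $\mathcal{R}_\omega$ and then try to glue them into a single one. Instead, it applies the modified-JLMS hypothesis to one carefully chosen pair: a given test state $\rho$ and the reference state $\omega=\rho\circ(\mathcal{P}_{a'}\otimes Id)$, where $\mathcal{P}_{a'}$ is the normal norm-one projection of $\mathcal{B}(\mathcal{H}_{code})$ onto $M_{a'}$ (this is exactly what the type-$I$ assumption buys). With this choice, three of the four terms in the modified-JLMS inequality vanish: the bulk relative entropy $S_{\rho,\omega}(M_{EW(\omega,\overline{A}\cup R)})$ is zero because $\rho$ and $\rho\circ(\mathcal{P}_{a'}\otimes Id)$ agree on $M_{a'}\otimes\mathcal{B}(\mathcal{H}_{code}^\ast)\supset M_{EW(\omega,\overline{A}\cup R)}$, and the two $S_{gen}$ terms cancel for the same reason (they depend only on the restrictions to $M_{a'}\otimes\mathcal{B}(\mathcal{H}_{code}^\ast)$). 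What remains is simply
\[
S_{\rho\circ(\mathcal{E}^c\otimes Id),\ \rho\circ(\mathcal{P}_{a'}\otimes Id)\circ(\mathcal{E}^c\otimes Id)}(M_A'\otimes\mathcal{B}(\mathcal{H}_{code}^\ast))\ \leq\ \varepsilon.
\]
Pinsker then gives $\|\rho\circ(\mathcal{E}^c\otimes Id)-\rho\circ(\mathcal{P}_{a'}\circ\mathcal{E}^c\otimes Id)\|\leq\sqrt{2\varepsilon}$ uniformly in $\rho$ and in $\dim\mathcal{H}_{code}^\ast$. A short lemma (Proposition~\ref{thm:prop} in the paper) converts this uniform predual bound into $\|\mathcal{E}^c-\mathcal{P}_{a'}\circ\mathcal{E}^c\|_{cb}\leq\sqrt{2\varepsilon}$, which is precisely the statement that $M_a$ is $\sqrt{2\varepsilon}$-\emph{private} for $\mathcal{E}^c$. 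One invocation of the Crann--Kribs--Levene--Todorov privacy/correctability duality then produces a \emph{single} recovery channel $\mathcal{R}:M_a\to M_A$ with $\|\mathcal{E}\circ\mathcal{R}-Id_{M_a}\|_{cb}\leq 2(2\varepsilon)^{1/4}$. The finite-dimensional constant comes from a Jordan-decomposition argument replacing Proposition~\ref{thm:prop}.

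Your route---Fawzi--Renner/twirled Petz recovery for each $\omega$, followed by an averaging over modular flow on $M_{a'}$---runs into two problems the paper explicitly flags. First, those universal-recovery theorems control $\|\rho\circ\mathcal{E}\circ\alpha-\rho\|$ in the \emph{operator} norm, not the completely bounded norm; the ``conversion from fidelity to cb-distance via Stinespring dilation'' you invoke is not a free step and is exactly what fails here. Second, the ``main obstacle'' you identify---making the recovery $\omega$-independent on $M_a$---is real, and your sketch (averaging Petz maps against modular flow) does not come with any cb-norm estimate. The paper sidesteps both issues entirely: by collapsing the problem to approximate \emph{privacy} of a fixed algebra $M_a$ for a fixed channel $\mathcal{E}^c$, state-independence is built in from the start, and the cb bound comes directly from the CKLT theorem.
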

We then move on to the state-dependent case, and explain how to extend the state-dependent $\alpha$-bit reconstruction of the black hole interior \cite{Hayden:2018khn} to the infinite-dimensional case. We also compare our approach through complementary recovery to the recent extension \cite{Faulkner:2020iou} of the twirled Petz map \cite{Cotler:2017erl} to the infinite-dimensional case. While the twirled Petz map provides an explicit recovery channel, the approximate recovery it gives is only controlled by a bound involving the operator norm, rather than the completely bounded norm that we get from our approach. Moreover, it is unclear whether the twirled Petz map still provides a valid reconstruction at subleading order in $G_N$, whereas the error in our approach is nonperturbative.

Taking this operator algebraic framework \cite{Kang:2018xqy,Kang:2019dfi,Faulkner:2020hzi,Faulkner:2020iou,Faulkner:2020kit,Harlow:2016vwg,Gesteau:2020rtg,Gesteau:2020wrk,Gesteau:2020hoz,Witten:2018zxz} on understanding bulk reconstruction, modular structure, and relative entropy equivalence between bulk and boundary, our results expand the dictionary between holography and operator algebras. We summarize this in Table \ref{tb:dictionary}, which also effectively captures the emerging quantum error correcting structure.

\begin{table}[H]
\centering
\begin{threeparttable}
\begin{tabular}{c c}
\toprule
    \textbf{Holography} & \textbf{Operator Algebras}\\\midrule
    Physical operators & Von Neumann algebras $M_{phys}$\\
    Logical operators & Von Neumann algebra $M_{code}$\\
    Projection onto the logical operators & Conditional expectation\\
    Exact bulk reconstruction & Invariance under conditional expectation\\
    JLMS formula & Takesaki theorem\\
    Petz map & Generalized conditional expectation\\\
    State-independent bound & Completely bounded norm bound\\
    (Approximate) complementary recovery & ($\varepsilon$-) correctability/privacy duality\\
    Twirled Petz map & Faulkner--Hollands map
\\\bottomrule
\end{tabular}
\end{threeparttable}
\caption{This provides a dictionary between concepts in holography and their operator algebraic counterparts.}
\label{tb:dictionary}
\end{table}

At the end of the paper, we expand on how our algebraic approach enables us to capture nontrivial features of the bulk, and propose some future research directions. In particular, we comment on the possibility of bridging the quantum error correction perspective to the semiclassical limit of the bulk, with the recent advances involving averaged theories and gravitational path integrals. We further analyze the modified-JLMS condition required for our theorem in more detail and show how our result can be relevant to the case of an evaporating black hole.

The remaining parts of the paper are organized as follows. In Section \ref{sec:preliminaries}, we first provide a toolkit of relevant mathematical concepts for the purpose of this paper, including conditional expectations, modular flow, relative entropy from Araki's perspective, the Petz map, and basic Tomita--Takesaki theory. In Section 3, we provide background on existing results for exact entanglement wedge reconstruction in infinite dimensions. In Section 4, we introduce the notions of approximate privacy and correctability for quantum channels, as well as a fundamental result that relates them. In Section 5, we utilize this result to prove that an approximate modified-JLMS formula implies approximate complementary recovery in the reconstruction wedge. In Section 6, we move on to the state-dependent case, beyond the reconstruction wedge. We describe the properties of black hole $\alpha$-bits in the case of an infinite-dimensional boundary Hilbert space, and we study the fundamental differences between our framework and the use of a universal recovery channel. In particular, we find that our framework, although non-constructive, is more robust, and valid up to nonperturbative errors in $G_N$. In Section 7, we explain how this perspective can help understanding various physical settings from our result, and provide some promising research directions. In particular, we explain how to connect our algebraic approach with the gravitational path integral and probing averages over theories. We study in detail the modified-JLMS condition, which we utilized as an assumption to our main theorem, and show the relevance of our framework in the case of an evaporating black hole.

\section{Preliminaries: algebraic perspective}
\label{sec:preliminaries}

In this section, we describe various mathematical objects that are used in this paper for studying entanglement wedge reconstruction. We also give some important properties of these objects. 

\subsection{Conditional expectations}

For holographic theories, the algebra of observables of quantum field theory on a fixed spacetime geometry is usually described as a von Neumann algebra acting on a Hilbert space, which is then isometrically embedded in a bigger Hilbert space onto which a bigger von Neumann algebra acts.

Motivated by this picture, we want to have an appropriate way to project any element of a von Neumann algebra onto a smaller one. The right object to consider for such an operation is a \textit{conditional expectation}, which can be defined as the following. 

\begin{defn}
Let $M$ be a von Neumann algebra and $N$ be a von Neumann subalgebra of $M$. A conditional expectation from $M$ to $N$ is a positive linear map
\begin{align*}
    E:M\to N
\end{align*}
satisfying, for $A\in M$ and $B,C\in N$,
\begin{align*}
    E(Id)=Id,\quad E(BAC)=BE(A)C .
\end{align*}
\end{defn}

One should think of a conditional expectation as the abstract equivalent of ``tracing out degrees of freedom." It takes into account the fact that an observer may only have an access to part of the information contained in a quantum system - in our case, the code subalgebra. 

The properties of conditional expectations, and how they interact with the states of the theory, will be at the heart of our discussion on entanglement wedge reconstruction. Indeed, as we will shortly see, in the exact picture, the holographic map can be identified with a conditional expectation \cite{Faulkner:2020hzi}.
\subsection{Tomita--Takesaki theory and modular flow}

Tomita--Takesaki theory is fundamental in the theory of von Neumann algebras. We briefly define the main objects appearing in this theory, and state the Tomita--Takesaki theorem. We utilize these concepts for our purposes to study relative entropies and the modular flow.

\begin{defn}
	\label{def:cyc}
	A vector $\ket{\Psi} \in \calh$ is {\em cyclic} with respect to a von Neumann algebra $M$ acting on a Hilbert space $\mathcal{H}$ when the vectors $\calo\ket{\Psi}$ for $\calo \in M$ form a dense set in $\calh$. 
\end{defn}
\begin{defn}
	\label{def:sep}
%	A vector $\ket{\Psi} \in \calh$ is said to be {\em separating} with respect to a von Neumann algebra $M$ when zero is the only operator in von Neumann algebra that annihilates the vector,  (i.e. $\calo\ket{\Psi} = 0 \implies \calo = 0$ for $\calo \in M$). 
	A vector $\ket{\Psi} \in \calh$ is {\em separating} with respect to a von Neumann algebra $M$ when $\calo\ket{\Psi} = 0 \implies \calo = 0$ for $\calo \in M$. 
\end{defn}
It is important to note that a cyclic and separating state can be interpreted as a state that sufficiently entangles the observables of the algebra $M$ with the rest of the system. In finite-dimensions, a state being cyclic and separating for a given algebra boils down to its restriction to the algebra being a thermal density matrix.

We remind the reader that states on von Neumann algebras can be defined as norm-continuous positive linear functionals of norm 1 directly.
\begin{defn}
A state $\omega$ on a von Neumann algebra $M$ is \textit{normal} if for any uniformly bounded, monotone, increasing net of positive operators $H_\alpha\in M$, $$\omega(\underset{\alpha}{\mathrm{sup}}\; H_\alpha)=\underset{\alpha}{\mathrm{sup}}\;\omega(H_\alpha).$$
\end{defn}
The predual $M_\ast$ of the von Neumann algebra $M$ is then spanned by the normal states.
\begin{defn}
A state $\omega$ on a von Neumann algebra $M$ is \textit{faithful} if for $A\in M$, $$\omega(A^\ast A)=0\implies A=0.$$
\end{defn}
It is easy to see that a separating vector for the von Neumann algebra $M$ induces a faithful state on $M$.

With these refined properties of states formulated in an algebraic manner, we now further define (relative) modular operators, first on a certain subset of the Hilbert space.
\begin{defn}
Let $\ket{\Psi},\ket{\Phi} \in \calh$ and $M$ be a von Neumann algebra. 
The {\em relative Tomita operator} is the operator $S_{\Psi | \Phi}$ defined by
\begin{equation*}
S_{\Psi|\Phi} \ket{x} := \ket{y}
\end{equation*}
for all sequences $\{\calo_n\} \in M$ for which the limits $\ket{x} = \lim_{n \rightarrow \infty} \calo_n \ket{\Psi}$ and $\ket{y} = \lim_{n \rightarrow \infty} \calo_n^\dagger \ket{\Phi}$ both exist.
\end{defn}

If $\ket{\Psi}$ is cyclic and separating with respect to the von Neumann algebra $M$, the relative Tomita operator can be defined on a dense subset of the Hilbert space.

\begin{thm}[\cite{Jones-vNalg}, p.94]
	Let $\ket{\Psi},\ket{\Phi} \in \calh$ be cyclic and separating vectors with respect to a von Neumann algebra $M$. Let $S_{\Psi|\Phi}$ and $S^\prime_{\Psi|\Phi}$ be the relative Tomita operators respectively defined with respect to $M$ and its commutant $M^\prime$. Then
	\begin{equation}
	S_{\Psi|\Phi}^\dagger = S^\prime_{\Psi|\Phi}, \ S_{\Psi|\Phi}^{\prime \, \dagger} = S_{\Psi|\Phi}.
	\end{equation}
\end{thm}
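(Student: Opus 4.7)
The plan is to derive the adjoint identity from a direct pairing on the dense domains $M\ket{\Psi}$ and $M^\prime\ket{\Psi}$, then to promote the resulting formal inclusion to the stated equality via a closability and approximation argument.

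I would first verify that both $M\ket{\Psi}$ and $M^\prime\ket{\Psi}$ are dense in $\mathcal{H}$. Density of $M\ket{\Psi}$ is the cyclicity assumption. For the commutant, the orthogonal projection $P$ onto the closure of $M^\prime\ket{\Psi}$ commutes with $M^\prime$, hence lies in $M^{\prime\prime}=M$, and satisfies $P\ket{\Psi}=\ket{\Psi}$, i.e.\ $(1-P)\ket{\Psi}=0$; since $\ket{\Psi}$ is separating for $M$, $1-P=0$, so $M^\prime\ket{\Psi}$ is dense. A symmetric argument shows $\ket{\Psi}$ is also separating for $M^\prime$. Both $S_{\Psi|\Phi}$ and $S^\prime_{\Psi|\Phi}$ are therefore densely defined antilinear operators.

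Next, I would carry out the core algebraic computation. For $\mathcal{O}\in M$ and $\mathcal{O}'\in M^\prime$, using $[\mathcal{O},\mathcal{O}']=0$,
\begin{align*}
\langle \mathcal{O}'\ket{\Psi}, S_{\Psi|\Phi}\mathcal{O}\ket{\Psi}\rangle &= \langle \mathcal{O}'\ket{\Psi}, \mathcal{O}^\dagger\ket{\Phi}\rangle = \langle \mathcal{O}'\mathcal{O}\ket{\Psi}, \ket{\Phi}\rangle \\
&= \langle \mathcal{O}\ket{\Psi}, (\mathcal{O}')^\dagger\ket{\Phi}\rangle = \overline{\langle S^\prime_{\Psi|\Phi}\mathcal{O}'\ket{\Psi}, \mathcal{O}\ket{\Psi}\rangle}.
\end{align*}
With the antilinear-adjoint convention $\langle \xi, T\eta\rangle = \overline{\langle T^\dagger\xi,\eta\rangle}$, this says precisely that $S^\prime_{\Psi|\Phi}$ agrees with $S_{\Psi|\Phi}^\dagger$ on the dense domain $M^\prime\ket{\Psi}$, and symmetrically $S_{\Psi|\Phi}$ agrees with $(S^\prime_{\Psi|\Phi})^\dagger$ on $M\ket{\Psi}$. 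Both operators thus have densely defined formal adjoints and are consequently closable; I would then pass to their closures.

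The main obstacle will be upgrading the inclusion $S^\prime_{\Psi|\Phi}\subseteq S_{\Psi|\Phi}^\dagger$ (and its dual) to the full equality in the theorem: every vector $\xi\in\mathrm{Dom}(S_{\Psi|\Phi}^\dagger)$ must lie in the graph-closure of $S^\prime_{\Psi|\Phi}$. This cannot be read off the algebraic pairing alone and is the technical heart of Tomita--Takesaki theory. I would follow the standard approach: regard the graph of $S_{\Psi|\Phi}^\dagger$ as an invariant subspace of $\mathcal{H}\oplus\mathcal{H}$ under a suitable algebra whose commutant is generated by bounded operators built from $M^\prime$ acting diagonally, then apply a Kaplansky-type density argument to a spectral cutoff of $S_{\Psi|\Phi}^\dagger$ to produce bounded approximants $\mathcal{O}'_n\in M^\prime$ with $\mathcal{O}'_n\ket{\Psi}\to\xi$ and $(\mathcal{O}'_n)^\dagger\ket{\Phi}\to S_{\Psi|\Phi}^\dagger\xi$ simultaneously. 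It is at exactly this step that the weak closure (von Neumann) structure of $M^\prime$, beyond its $C^\ast$-algebraic content, is indispensable. Once the bounded approximation is in place, both displayed identities follow at once.
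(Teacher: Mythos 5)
The paper does not actually prove this statement: it is quoted verbatim from Jones's lecture notes (the citation to p.~94 is the entire justification), so there is no in-paper argument to compare yours against. Judged on its own terms, your strategy is the standard one. The density preliminaries are right: the projection onto $\overline{M^\prime\ket{\Psi}}$ lies in $(M^\prime)^\prime=M$ and fixes $\ket{\Psi}$, so separation for $M$ forces it to be the identity, and symmetrically $\ket{\Psi}$ is separating for $M^\prime$. Your pairing computation is also correct and, combined with density of $M\ket{\Psi}$ and the fact that adjoints are closed, it yields the inclusion $S^\prime_{\Psi|\Phi}\subseteq S_{\Psi|\Phi}^\dagger$ cleanly. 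You are right that this is the easy half and that the content of the theorem is the reverse inclusion.

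That reverse inclusion is where your proposal stops being a proof and becomes a description, and the description has one imprecision worth flagging. Because $S_{\Psi|\Phi}^\dagger$ is antilinear, its graph is only a real-linear subspace of $\calh\oplus\calh$, and you cannot take a ``spectral cutoff of $S_{\Psi|\Phi}^\dagger$'' directly. The standard route (the one in Jones's notes) is instead: given $(\xi,\eta)$ in the graph of $S_{\Psi|\Phi}^\dagger$, define the auxiliary \emph{complex-linear} operator $R_0:\calo\ket{\Psi}\mapsto\calo\xi$ for $\calo\in M$ (well defined because $\ket{\Psi}$ is separating for $M$), show using the adjoint relation that $R_0$ is closable with $\eta$ encoding $\overline{R_0}^{\,\dagger}\ket{\Phi}$, observe that $\overline{R_0}$ is affiliated with $M^\prime$, and then apply the polar decomposition and bounded spectral cutoffs $\chi_{[0,n]}(|\overline{R_0}|)\in M^\prime$ to manufacture bounded $\calo_n^\prime\in M^\prime$ with $\calo_n^\prime\ket{\Psi}\to\xi$ and $(\calo_n^\prime)^\dagger\ket{\Phi}\to\eta$. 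This is the step where the von Neumann (rather than $C^\ast$) structure enters, as you correctly anticipate, but it is functional calculus of an affiliated operator rather than a Kaplansky density argument, and the operator being cut off is $\overline{R_0}$, not $S_{\Psi|\Phi}^\dagger$. Until that construction is carried out, the second displayed identity is asserted rather than proved.
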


\begin{defn}
Let $S_{\Psi|\Phi}$ be the relative Tomita operator for two vectors $\ket{\Psi},\ket{\Phi} \in \calh$, which are cyclic and separating with respect to a von Neumann algebra $M$. The {\em relative modular operator} is $$\Delta_{\Psi|\Phi} := S_{\Psi|\Phi}^\dagger S_{\Psi|\Phi}.$$
\end{defn}

When $\ket{\Psi}$ and $\ket{\Phi}$ are the same vector, the relative Tomita operator specializes into a Tomita operator as following.

\begin{defn}
	Let $M$ be a von Neumann algebra on $\calh$ and $\ket{\Psi}$ be a cyclic and separating vector for $M$. The \emph{Tomita operator} $S_\Psi$ is
	$$ S_\Psi := S_{\Psi | \Psi},$$ where $S_{\Psi | \Psi}$ is the relative modular operator defined with respect to $M$. The \emph{modular operator} $\Delta_\Psi = S_\Psi^\dagger S_\Psi$ and the \emph{modular conjugation} $J_\Psi$ are the operators that appear in the polar decomposition of $S_\Psi$ such that
	$$S_\Psi = J_\Psi \Delta_{\Psi}^{1/2}.$$
\end{defn}

We note that the modular operator given by a state defined by a thermal density matrix in finite-dimensions has a simple expression: it is just the thermal density matrix itself. In particular, its logarithm coincides with the familiar notion of modular Hamiltonian.

The Tomita--Takesaki theorem is a powerful tool to study properties of von Neumann algebras, which is given as the following.

\begin{thm}[Tomita--Takesaki \cite{Tomita-Takesaki}]
	\label{thm:modflow}
	Let $M$ be a von Neumann algebra on $\calh$ and let $\ket{\Psi}$ be a cyclic and separating vector for $M$. Then \begin{itemize}
		\item $J_\Psi M J_\Psi = M^\prime.$
		\item $\Delta_\Psi^{it} M \Delta_\Psi^{-it} = M \quad \forall t \in \mathbb{R}$.
	\end{itemize} 
\end{thm}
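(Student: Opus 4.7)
The plan is to follow the classical two-step strategy: first build and analyze the polar decomposition of the Tomita operator $S_\Psi$, then establish the KMS-type identity that simultaneously forces modular invariance of $M$ and the swap $JMJ=M'$.

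First, I would verify that $S_\Psi$ is a densely defined, closed, antilinear involution. Cyclicity of $\ket{\Psi}$ for $M$ makes $D_0=M\ket{\Psi}$ dense, and the separating property makes the assignment $\calo\ket{\Psi}\mapsto \calo^\dagger\ket{\Psi}$ well defined on $D_0$; call this operator $S_0$. Since $\ket{\Psi}$ is also separating and cyclic for $M'$ (a standard consequence of cyclicity/separation for $M$), the analogous operator $F_0$ on $M'\ket{\Psi}$ is well defined, and one checks $S_0\subseteq F_0^*$, so both are closable with closures $S_\Psi$ and $S_\Psi'=F_\Psi$ satisfying $S_\Psi^*=F_\Psi$, in agreement with the theorem quoted from Jones. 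I would then invoke the polar decomposition to write $S_\Psi=J_\Psi\Delta_\Psi^{1/2}$ with $J_\Psi$ antiunitary and $\Delta_\Psi\geq 0$ self-adjoint, and from $S_\Psi^2\subseteq \mathrm{Id}$ deduce algebraically that $J_\Psi^2=\mathrm{Id}$, $J_\Psi\Delta_\Psi J_\Psi=\Delta_\Psi^{-1}$, and $F_\Psi=J_\Psi\Delta_\Psi^{-1/2}$.

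The heart of the proof is to establish a KMS identity: for $a\in M$ and $c\in M'$, the function
\begin{equation*}
f_{a,c}(z)\ :=\ \bra{\Psi}c\,\Delta_\Psi^{-iz}\,a\ket{\Psi}
\end{equation*}
extends to a function that is analytic on the strip $0<\Im z<1/2$, bounded and continuous on its closure, with boundary values linked by the action of $J_\Psi$. The idea is to use $S_\Psi\ket{\Psi}=\ket{\Psi}$ and the identity $S_\Psi a\ket{\Psi}=a^\dagger\ket{\Psi}$ to rewrite the value at $z=i/2$ using $\Delta_\Psi^{1/2}a\ket{\Psi}=J_\Psi a^\dagger\ket{\Psi}$, and the corresponding identity on the $M'$ side using $F_\Psi c\ket{\Psi}=c^\dagger\ket{\Psi}$, which gives
\begin{equation*}
\bra{\Psi}c\,\Delta_\Psi^{1/2}\,a\ket{\Psi}\ =\ \bra{J_\Psi a^\dagger\Psi}J_\Psi c^\dagger \Psi\rangle\ =\ \bra{\Psi}a\,c\ket{\Psi}.
\end{equation*}
I would first prove this on the dense set of analytic vectors for $\Delta_\Psi$ (defined by spectral cutoffs $\chi_{[1/n,n]}(\Delta_\Psi)$) and then extend by a three-lines/Phragm\'en--Lindel\"of argument together with density.

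From the KMS identity, modular invariance of $M$ is immediate: replacing $a$ by $\sigma_t(a):=\Delta_\Psi^{it}a\Delta_\Psi^{-it}$ and varying $c\in M'$ shows that for every $c\in M'$ the commutator $[\sigma_t(a),c]$ has vanishing matrix elements on the dense set $M'\ket{\Psi}\cup M\ket{\Psi}$, hence $\sigma_t(a)\in M''=M$, giving $\Delta_\Psi^{it}M\Delta_\Psi^{-it}\subseteq M$; replacing $t$ by $-t$ yields equality. For $J_\Psi M J_\Psi=M'$, I would combine the identity $J_\Psi\Delta_\Psi^{it}J_\Psi=\Delta_\Psi^{it}$ with the KMS relation to show that for $a,b\in M$ the element $J_\Psi a^\dagger J_\Psi$ commutes with $b$ on $D_0$, giving $J_\Psi M J_\Psi\subseteq M'$; applying the same reasoning with $(M,\ket{\Psi})$ replaced by $(M',\ket{\Psi})$ and using $J_{\Psi}'=J_\Psi$ (from $F_\Psi=J_\Psi\Delta_\Psi^{-1/2}$) gives the reverse inclusion.

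The main obstacle will be the rigorous handling of the unbounded operator $\Delta_\Psi$ in the analytic continuation: one cannot naively plug $z=i/2$ into $\Delta_\Psi^{-iz}$, and Takesaki's familiar trick of averaging $\sigma_t(a)$ against Gaussians to produce bounded analytic elements of $M$ is circular here because it presupposes $\sigma_t(M)\subseteq M$. The standard way out, which I would follow, is to restrict to vectors in $\chi_{[1/n,n]}(\Delta_\Psi)D_0$, prove the KMS identity there using the bounded functional calculus, and only then pass to the limit $n\to\infty$; proving modular invariance and $J_\Psi M J_\Psi=M'$ simultaneously avoids the circularity.
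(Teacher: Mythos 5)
The paper does not prove this statement: it is quoted as the classical Tomita--Takesaki theorem and cited to the literature, so there is no internal proof to compare against. Judged on its own terms, your sketch has the right skeleton for step one (closability of $S_0$ via $S_0\subseteq F_0^\ast$, the polar decomposition, $J_\Psi^2=\mathrm{Id}$, $J_\Psi\Delta_\Psi J_\Psi=\Delta_\Psi^{-1}$, $F_\Psi=J_\Psi\Delta_\Psi^{-1/2}$), and that part is standard and correct. But the core of the argument has a genuine gap, in two respects. First, the displayed boundary identity $\bra{\Psi}c\,\Delta_\Psi^{1/2}\,a\ket{\Psi}=\bra{\Psi}ac\ket{\Psi}$ is false: a direct computation in the finite-dimensional model $M=M_2(\mathbb{C})\otimes 1$, $\ket{\Psi}=\sum_i\sqrt{p_i}\ket{ii}$, where $\Delta_\Psi$ acts as $X\mapsto\rho X\rho^{-1}$, gives $\mathrm{Tr}(\rho\, a\,\tilde c^{\,T})$ for the left side and $\mathrm{Tr}(\rho^{1/2}a\,\rho^{1/2}\tilde c^{\,T})$ for the right, which differ. (The middle step also misapplies antiunitarity: $\langle J_\Psi x|J_\Psi y\rangle=\langle y|x\rangle$ produces $\langle c^\dagger\Psi|a^\dagger\Psi\rangle$, not the quantity you need. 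The correct value is $f(i/2)=\langle a^\dagger\Psi|\Delta_\Psi^{-1/2}c\,\Psi\rangle$.)

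Second, and more fundamentally, the device you propose to break the circularity does not work. Knowing the two-point function $\bra{\Psi}c\,\Delta_\Psi^{-iz}a\ket{\Psi}$ on a strip cannot by itself yield $\sigma_t(a)\in M''$ or $J_\Psi MJ_\Psi\subseteq M^\prime$: for that you must control matrix elements between general vectors $b^\prime\ket{\Psi}$, $b^{\prime\prime}\ket{\Psi}$, i.e.\ four-point functions such as $\bra{\Psi}c_1\,\Delta_\Psi^{it}a\,\Delta_\Psi^{-it}c_2\ket{\Psi}$, and moving $\Delta_\Psi^{-it}$ past $c_2$ presupposes exactly the modular invariance being proved. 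The spectral cutoffs $\chi_{[1/n,n]}(\Delta_\Psi)$ do not repair this, because they do not preserve $M\ket{\Psi}$ --- the truncated vectors are no longer of the form $b\ket{\Psi}$ with $b\in M$, so the ``KMS identity on the cutoff subspace'' is not an algebraic statement about $M$ and $M^\prime$ and there is nothing to pass to the limit with. Every known proof supplies an additional nontrivial ingredient precisely here: Takesaki's left Hilbert algebra analysis of bounded elements, or van Daele's resolvent lemma (producing, for each $c\in M^\prime$ and $\lambda>0$, an element of $M$ implementing $(\Delta_\Psi+\lambda)^{-1}$-smeared vectors) combined with an integral representation of $\Delta_\Psi^{it}$ in terms of resolvents, or the Rieffel--van Daele real-subspace method. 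Your sketch names the obstacle accurately but does not actually overcome it, and that missing step is the theorem.
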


This theorem shows that, associated to each cyclic separating state, there is a canonical ``time evolution" given by the modular flow. This flow can be defined through conjugating an algebra element by imaginary powers of the modular operator, and its physical interpretation is that the restriction of the state $\ket{\psi}$ to $M$ is at thermal equilibrium with respect to its modular flow \cite{Connes:1994hv}. In quantum field theory, the Bisognano--Wichmann theorem \cite{Haag} shows that the modular flow of the vacuum generates null boosts for Rindler wedges. The fact that the vacuum is at thermal equilibrium with respect to these null boosts is then related to the Unruh effect.

We emphasize that we do not require any additional structure, but rather, let the operator-algebraic analysis to provide all the necessary tools. However, our setup can be applied to more structured particular cases as well where there exists a more general definition of modular objects for non-cyclic and non-separating vectors; see for example \cite{Ceyhan:2018zfg}.

\subsection{The Connes cocycle and Araki's relative entropy}

Based on the notion of modular flow, we can now define the Connes cocycle of two states $\psi$ and $\varphi$, with $\varphi$ faithful and normal.

\begin{defn}
Let $\psi$ and $\varphi$ be two states on a von Neumann algebra $M$, with $\varphi$ faithful and normal. The Connes cocycle of $\psi$ and $\varphi$ is defined as $$
[D\psi:D\varphi]_t:=\Delta_{\psi|\varphi}^{it}\Delta_{\varphi}^{-it}.
$$
\end{defn}

It is a highly nontrivial result that the Connes cocycle of two states on $M$ is always in $M$. In other words, the difference between modular flows of two states is always an inner automorphism. The Connes cocycle has a nice geometric interpretation in AdS/CFT \cite{Bousso:2020yxi}, and it would be nice to understand it in more detail in the context of our framework.

Another useful quantity to define from Tomita--Takesaki theory is Araki's relative entropy, which is provided in Definition \ref{def:relent}.

\begin{defn}[Araki \cite{Araki}] \label{def:relent}
Let $\ket{\Psi},\ket{\Phi} \in \calh$ and $\ket{\Psi}$ be cyclic and separating with respect to a von Neumann algebra $M$. Let $\Delta_{\Psi|\Phi}$ be the relative modular operator. The {\em relative entropy} with respect to $M$ of $\ket{\Psi}$ is
	\begin{equation*} 
	\cals_{\Psi|\Phi}(M) = - \braket{\Psi|\log \Delta_{\Psi | \Phi}|\Psi}. 
	\end{equation*} 
\end{defn}

It is important to note that Araki's relative entropy can be obtained from the Connes cocycle by the following formula:
\begin{align}
    S(\omega,\varphi)=i\underset{t\rightarrow 0^+}{\mathrm{lim}}\,t^{-1}(\omega([D\varphi:D\omega]_t)-1).
\end{align}

This formula shows that Araki's relative entropy can be thought of as a sort of derivative of an expectation value of the Connes cocycle flow.

Araki's notion of relative entropy will be the main tool coming from Tomita--Takesaki theory that we will require in this paper. Indeed, the equivalence of relative entropies between the bulk and the boundary is one of the equivalent characterizations of exact entanglement wedge reconstruction, as well as the starting point of our theorem, which tackles approximate recovery settings that include nonperturbative gravity corrections. 

It is important to remind the readers that there exists a generalization of the notion of relative entropy that allows to define it even for non-cyclic and non-separating states, and satisfies the Pinsker inequality. The details can be found in \cite{OhyaPetz}.

\subsection{The Petz map from Tomita--Takesaki theory}

A map that is often proposed for entanglement wedge reconstruction is the Petz map \cite{Cotler:2017erl,Chen:2019gbt}. This map has a particularly nice property as a natural map between a von Neumann algebra and a von Neumann subalgebra leaving a given state $\varphi$ invariant. 

\begin{defn}
Let $M_{code}\subset M_{phys}$ be two von Neumann algebras acting on a Hilbert space $\mathcal{H}$. Let $\ket{\Phi}$ be a cyclic separating vector for $M_{phys}$, and $J$ the associated modular conjugation. Similarly, let $J_0$ be the modular conjugation for $M_{code}$ acting on $\overline{M_{code}\ket{\Phi}}$. Let $P$ be the orthogonal projection onto $\overline{M_{code}\ket{\Phi}}$. Then, the Petz map $E_\varphi$ associated to $\ket{\Phi}$ is given by 
$$E_\varphi(A):=J_0PJAJPJ_0.$$
\end{defn}

It is important to note that this $E_\varphi$ is a map from $M_{phys}$ to $M_{code}$ and satisfies 
\begin{align}
    \varphi\circ E_\varphi=\varphi.
\end{align}
However, nothing tells us that the Petz map will follow the axioms of a conditional expectation!\footnote{One should approach with caution the misleading terminology ``$\varphi$-conditional expectation" for the Petz map, which is often used in the mathematical literature.} In fact, we explain a necessary and sufficient condition for it to be a conditional expectation in Section \ref{sec:condexpectation} via a theorem of Takesaki. We want to emphasize this aspect as the cases where the Petz map is a conditional expectation will coincide with the cases of exact entanglement wedge reconstruction in our discussion. When taking into account corrections from gravitational effects that are nonperturbative in $G_N$, the ``exact" entanglement wedge reconstruction breaks down. In this scenario with gravitational corrections, the situation will be much more subtle and the Petz map will no longer be a good choice for entanglement wedge reconstruction in general, although we will briefly comment on the efficacy of its twirled version in Section \ref{sec:recoverychannel}.

\section{Exact infinite-dimensional entanglement wedge reconstruction}

We review the code subspace formalism for the exact correspondence between entanglement wedge reconstruction and relative entropy equivalence. This perspective is first introduced in \cite{Harlow:2016vwg}, later extended to infinite-dimensional settings in \cite{Kang:2018xqy,Kang:2019dfi,Faulkner:2020hzi}, and this technology has even been further expanded to state-dependent contexts in \cite{Gesteau:2020rtg}. Compositing these together and bringing it to a setting of our interest, we provide a summarizing theorem that prescribes the mathematical structure underlying this formalism, including conditional expectations, the Petz map, and Takesaki's theorem.

\subsection{Entanglement wedge reconstruction in the code subspace formalism}

The Ryu--Takayanagi formula, which relates the generalized entropy of a semiclassical bulk state of AdS/CFT to the entanglement entropy of the corresponding boundary state, is equivalent to relative entropy conservation between bulk and boundary \cite{Jafferis:2015del} and the quantum error correcting structure of the semiclassical limit of AdS/CFT, which allows us to see the space of states that represent effective field theory on a fixed curved background in the bulk as a code subspace of the space of boundary states.

In the modern information-theoretic formalism, this code subspace of the theory is seen as a vector space isometrically embedded into the physical Hilbert space of quantum gravity. In this setup, bulk reconstruction is seen as the existence of operators acting on the physical Hilbert space that reproduce the action of operators acting on the code subspace. The equivalence between the complementary recovery and the conservation of relative entropy (in the context of infinite-dimensions) has been proven in increasingly general cases \cite{Kang:2018xqy,Kang:2019dfi,Gesteau:2020rtg,Faulkner:2020hzi}. Here, we give a summary theorem that comprises all these results in the context of von Neumann algebras.

\begin{thm}
\label{thm:summary}
Let $u:\mathcal{H}_{code}\rightarrow\mathcal{H}_{phys}$ be an isometric map of Hilbert spaces, let $M_{phys}$ be a von Neumann algebra on $\mathcal{H}_{phys}$ and $M_{code}$ be a von Neumann algebra on $\mathcal{H}_{code}$. Suppose that there exists $\ket{\psi}\in\mathcal{H}_{code}$ such that $u\ket{\psi}$ is cyclic and separating with respect to $M_{phys}$. Consider two following statements:

    \noindent (i) There exist unital normal injective $^\ast$-homomorphisms $\iota:M_{code}\rightarrow M_{phys}$ and $\iota^\prime:M_{code}^\prime\rightarrow M_{phys}^\prime$ such that for $O,O^\prime\in M_{code},M^\prime_{code}$, $$u O=\iota(O)u,\;uO^\prime=\iota^\prime(O^\prime)u.$$
    (ii) If $\ket{\varphi}$ and $\ket{\psi}$ are two vectors in $\mathcal{H}_{code}$, then 
    \begin{align}
    \label{entropy}
    S_{\psi,\varphi}(M_{code})=S_{u\psi,u\varphi}(M_{phys}),\quad
    S_{\psi,\varphi}(M^\prime_{code})=S_{u\psi,u\varphi}(M^\prime_{phys}).
    \end{align}

Then, $(i)\Rightarrow (ii)$.

If we further assume that the set of cyclic and separating vectors with respect to $M_{code}$ is dense in $\mathcal{H}_{code}$ and $u$ maps cyclic and separating states for $M_{code}$ to cyclic and separating states for $M_{phys}$, then $(i)\Leftrightarrow (ii)$.
\end{thm}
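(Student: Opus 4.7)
My plan is to prove the two implications separately. The forward direction $(i)\Rightarrow(ii)$ reduces to an explicit calculation with intertwined Tomita operators combined with Takesaki's theorem, while $(ii)\Rightarrow(i)$ is the substantive direction and relies on a Petz-type sufficiency argument that uses the density of cyclic and separating vectors to globalize pointwise relative-entropy data into an algebraic intertwiner.

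For $(i)\Rightarrow(ii)$, I would first use the two intertwiners to show that the unital completely positive map $\mathcal{E}(A):=u^\dagger A u$ from $M_{phys}$ to $\mathcal{B}(\mathcal{H}_{code})$ lands in $M_{code}$: for any $O^\prime\in M_{code}^\prime$, one has $\mathcal{E}(A)O^\prime = u^\dagger A\iota^\prime(O^\prime)u = u^\dagger\iota^\prime(O^\prime)Au = O^\prime\mathcal{E}(A)$, since $\iota^\prime(O^\prime)\in M_{phys}^\prime$ commutes with $A\in M_{phys}$. Because $\mathcal{E}\circ\iota=\mathrm{Id}_{M_{code}}$, the composition $\iota\circ\mathcal{E}$ is a normal conditional expectation from $M_{phys}$ onto $\iota(M_{code})$ that preserves the vector state induced by $u\varphi$. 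Applying Takesaki's theorem, the modular flow of $u\varphi$ on $M_{phys}$ leaves $\iota(M_{code})$ globally invariant, so on the subspace $\overline{\iota(M_{code})u\psi} = u\overline{M_{code}\psi}$ the relative modular operator $\Delta_{u\psi|u\varphi}$ associated to $M_{phys}$ agrees with $u\Delta_{\psi|\varphi}u^\dagger$ associated to $M_{code}$. Since $u\psi$ lies in this subspace, spectral calculus of $\log\Delta$ yields $S_{u\psi,u\varphi}(M_{phys})=S_{\psi,\varphi}(M_{code})$; the primed statement follows from the symmetric argument using $\iota^\prime$.

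For $(ii)\Rightarrow(i)$, I would invoke Petz's characterization of equality in the monotonicity of relative entropy. Preservation of the two relative entropies in \eqref{entropy} for all pairs of vectors in $\mathcal{H}_{code}$ means that the natural "restriction data" of states on $M_{phys}$ through $u$ is sufficient, in Petz's sense, for the family of states under consideration. Sufficiency forces this data to be implemented by a $^\ast$-homomorphism, which I would construct by fixing a reference cyclic and separating $\ket{\psi}\in\mathcal{H}_{code}$ and defining $\iota(O)$ via an appropriate Connes cocycle built from the relative modular operators in bulk and boundary, extended by closure from the dense set $M_{phys}u\psi\subset\mathcal{H}_{phys}$. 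The density of cyclic and separating vectors in $\mathcal{H}_{code}$, together with the hypothesis that $u$ transfers cyclic and separating vectors to cyclic and separating vectors, ensures that $\iota$ is independent of the choice of reference state and extends to a well-defined unital normal injective $^\ast$-homomorphism. The same construction applied to $M_{code}^\prime$ then yields $\iota^\prime$.

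The main obstacle is in the converse direction, where one must promote pointwise relative-entropy equalities to a genuine state-independent $^\ast$-homomorphism with the full intertwining structure. In the infinite-dimensional setting the normality, injectivity, and multiplicativity properties must each be verified by hand; normality follows from the spatial implementation through $u$, injectivity uses that $u$ is an isometry and hence has trivial kernel, and the $^\ast$-homomorphism property is exactly the content of Petz's equality theorem. The density hypothesis and the cyclic-separating transfer hypothesis are precisely the ingredients needed to bridge the single-reference-state Petz construction with the required state-independent algebraic map, and the details closely parallel the constructions carried out in \cite{Kang:2018xqy,Kang:2019dfi,Gesteau:2020rtg,Faulkner:2020hzi}.
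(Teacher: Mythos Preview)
The paper does not supply a self-contained proof of this theorem: it is stated explicitly as a summary of results already established in \cite{Kang:2018xqy,Kang:2019dfi,Gesteau:2020rtg,Faulkner:2020hzi}, and the only argument given in the paper proper is the remark immediately following the statement (that $\iota,\iota'$ are automatically unital, normal and injective once one has a cyclic and separating vector) together with the discussion in Section~\ref{sec:condexpectation}.

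Your forward direction $(i)\Rightarrow(ii)$ is essentially what Section~\ref{sec:condexpectation} develops: the paper constructs the conditional expectation $E(A)=\iota(u^\dagger A u)$, checks it preserves all vector states $u\ket{\varphi}$, and then invokes Takesaki's theorem and the sufficiency machinery of \cite{OhyaPetz} to conclude conservation of relative entropies. Your route to the same place via the explicit relation $\Delta_{u\psi|u\varphi}\big|_{u\overline{M_{code}\psi}}=u\Delta_{\psi|\varphi}u^\dagger$ is legitimate, but that identity does not drop out of Takesaki's theorem alone; it is the content of the direct Tomita-operator intertwining computation in \cite{Kang:2018xqy}. Either way one lands on the same conclusion.

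For $(ii)\Rightarrow(i)$ the paper gives no argument beyond the citations, so there is nothing to compare against in the paper itself. Your Petz-sufficiency outline is a reasonable high-level strategy and is closest in spirit to the \cite{Faulkner:2020hzi} treatment. It differs from the approach in \cite{Kang:2018xqy,Kang:2019dfi}, which proceeds by a direct computation: one first shows that the equality of relative entropies for a dense set of cyclic separating pairs forces $uS^{code}_{\psi|\varphi}=S^{phys}_{u\psi|u\varphi}u$ at the level of relative Tomita operators, and then builds $\iota$ spatially using the resulting modular conjugation intertwiner. Your sketch is more conceptual, theirs more computational; both rely on exactly the two extra hypotheses (density of cyclic separating vectors and their preservation under $u$) to pass from pointwise entropy data to a global $^\ast$-homomorphism. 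The one place your outline is thin is the assertion that ``sufficiency forces this data to be implemented by a $^\ast$-homomorphism'': Petz sufficiency gives a completely positive recovery map, and promoting it to a multiplicative intertwiner with the full $uO=\iota(O)u$ property is precisely the nontrivial content of the cited references rather than an immediate consequence of the abstract equality case.
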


Note that in \cite{Kang:2018xqy}, it is not explicitly shown that $\iota$ and $\iota^\prime$ define unital, normal, injective $^\ast$-homomorphisms. However, it is a straightforward consequence of the fact that $u\ket{\psi}$ is cyclic and separating with respect to $M_{phys}$ and $M^\prime_{phys}$: the separating property directly implies homomorphism properties as well as unitality and injectivity, while normality follows from Proposition 2.5.11 of \cite{Anantharaman}.

In the next subsection, we show how Theorem \ref{thm:summary} is deeply related to some well-explored concepts in the theory of von Neumann algebras, such as conditional expectations and an important theorem of Takesaki \cite{Takesaki:72}.

\subsection{Link to conditional expectations and Takesaki's theorem}
\label{sec:condexpectation}

Under the assumptions Theorem \ref{thm:summary}, we can infer that the reconstruction operators $\iota(O)$s form a von Neumann subalgebra of $M_{phys}$.
We suppose that the assumptions of Theorem \ref{thm:summary} are satisfied. Utilizing this setup, we now turn to constructing a conditional expectation from $M_{phys}$ onto $\iota(M_{code})$. 

In the proofs of the various results available on exact entanglement wedge reconstruction in infinite-dimensions \cite{Kang:2018xqy,Faulkner:2020hzi}, a crucial lemma showed that if $O\in M_{phys}$, then $u^\dagger Ou\in M_{code}$. It follows that the map 
\begin{align}
    E:O\longmapsto\iota(u^\dagger Ou)
\end{align}
is well-defined from $M_{phys}$ to $\iota(M_{code})$ and satisfies the conditional expectation property. Indeed, we see that the identity is preserved under this map,
\begin{align}
    E(Id)=\iota(u^\dagger u)=Id,
\end{align}
and if $A\in M_{phys}$ and $B,C\in M_{code}$,
\begin{align}
    E(\iota(B)A\iota(C))=\iota(u^\dagger\iota(B)A\iota(C)u)=\iota(Bu^\dagger AuC)=\iota(B)\iota(u^\dagger Au)\iota(C)=\iota(B)E(A)\iota(C).
\end{align}
Moreover, $E$ leaves any state of the form $u\ket{\varphi}$ invariant, if $\ket{\varphi}$ is cyclic and separating for $\mathcal{M}_{code}$. For such cases, we see that $u\ket{\varphi}$ is cyclic and separating for $\mathcal{M}_{phys}$ and, for $A\in M_{phys}$, 
\begin{align}
    \bra{\varphi}u^\dagger E(A)u\ket{\varphi}=\bra{\varphi}u^\dagger u u^\dagger A u\ket{\varphi}=\bra{\varphi}u^\dagger Au\ket{\varphi}.
\end{align}

This remark directs us towards a more conceptual understanding of entanglement wedge reconstruction. Indeed, the isometry $u$ provides us with a conditional expectation, that projects $M_{phys}$ onto $\iota(M_{code})$ and preserves the states in the code subspace. This is the definition of a \textit{sufficient} von Neumann algebra inclusion, which is the cornerstone of the framework of exact entanglement wedge reconstruction.

Interestingly, this structure can be interpreted as a theorem of Takesaki in \cite{Takesaki:72}. In fact, this theorem shows the equivalence between the existence of a state-preserving conditional expectation, from a von Neumann algebra onto one of its subalgebras, and the stabilization of this subalgebra under the modular flow of the state. We rephrase this theorem as Theorem \ref{thm:Takesaki} to be best applied to our physical setup.

\begin{thm}[Takesaki \cite{Takesaki:72}]
\label{thm:Takesaki}
Let $\varphi$ be a faithful normal state on a von Neumann algebra $M$, and let $N$ be a von Neumann subalgebra of $M$. Then, there exists a faithful normal conditional expectation $E:M\rightarrow N$ such that $\varphi\circ E=\varphi$ if and only if $\sigma_t(N)\subset N$, where $\sigma_t$ is the modular flow associated to $\varphi$ for the von Neumann algebra $M$. Moreover, the conditional expectation $E$ which realizes the sufficiency condition is unique, and coincides with the Petz map of $\varphi$. 
\end{thm}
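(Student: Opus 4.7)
The plan is to work inside the GNS representation of $\varphi$, with cyclic and separating vector $\xi_\varphi\in\calh$ and modular data $(\Delta_\varphi,J_\varphi)$ for $M$. The central geometric object is the orthogonal projection $P$ onto $\calk:=\overline{N\xi_\varphi}$. Since $\xi_\varphi$ is separating for $M\supset N$ and cyclic for $N$ on $\calk$ by construction, it induces a standard form $(N,\calk,\xi_\varphi)$ with its own modular objects $(\Delta_0,J_0)$. The whole theorem then becomes a statement about commutation relations between $P$, $\Delta_\varphi^{it}$, and $J_\varphi$.

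For the $(\Leftarrow)$ direction, I would first upgrade $\sigma_t^\varphi(N)\subset N$ to $\sigma_t^\varphi(N)=N$ by applying the hypothesis at $-t$, then combine with $\Delta_\varphi^{it}\xi_\varphi=\xi_\varphi$ to conclude $\Delta_\varphi^{it}\calk=\calk$ and hence $[P,\Delta_\varphi^{it}]=0$. The observation $S_\varphi(n\xi_\varphi)=n^\ast\xi_\varphi\in N\xi_\varphi$, together with the polar decomposition $S_\varphi=J_\varphi\Delta_\varphi^{1/2}$ and a standard closure argument, identifies $\Delta_\varphi|_\calk=\Delta_0$ and $J_\varphi|_\calk=J_0$ and forces $[P,J_\varphi]=0$. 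The Petz formula $E(A):=J_0 P J_\varphi A J_\varphi P J_0$ then defines a map from $M$ into $N$ satisfying the conditional-expectation axioms: the bimodule property uses $P\in N'$ (from $N$-invariance of $\calk$) together with $J_\varphi N J_\varphi\subset M'\subset N'$; unitality and positivity are immediate; and $\varphi\circ E=\varphi$ comes from evaluation at $\xi_\varphi$.

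For the $(\Rightarrow)$ direction, assume a faithful, normal, $\varphi$-preserving conditional expectation $E:M\to N$ exists. A short calculation from $\varphi\circ E=\varphi$ and the bimodule property gives $E(A)\xi_\varphi=PA\xi_\varphi$ for all $A\in M$, pinning $E$ to $P$. The remaining task is to derive $[P,\Delta_\varphi^{it}]=0$, from which $\sigma_t^\varphi(N\xi_\varphi)\subset N\xi_\varphi$ and hence $\sigma_t^\varphi(N)\subset N$ follow by the separating property. The standard route is a KMS argument: with $\psi:=\varphi|_N$ having its own modular flow $\sigma_t^\psi$ on $N$, one compares the analytic extensions of $t\mapsto\varphi(m\sigma_t^\varphi(n))$ and $t\mapsto\psi(m\sigma_t^\psi(n))$ for $m,n\in N$, using $\varphi=\psi\circ E$ at the boundaries of the KMS strip to force agreement, and then invokes the uniqueness of the modular flow characterized by KMS to conclude $\sigma_t^\varphi|_N=\sigma_t^\psi$.

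Uniqueness of $E$, and hence its identification with the Petz map built in the first direction, follows from $E(A)\xi_\varphi=PA\xi_\varphi$: since $\xi_\varphi$ is separating for $N$, the element $E(A)\in N$ is uniquely determined by the right-hand side, which depends only on $P$ (and $\xi_\varphi$). The main obstacle I anticipate is the coordinated handling of unbounded modular operators: in the $(\Leftarrow)$ direction, passing from $[P,\Delta_\varphi^{it}]=0$ to $[P,J_\varphi]=0$ requires a careful domain analysis of $S_\varphi$ and its polar decomposition; in the $(\Rightarrow)$ direction, the KMS comparison requires juggling two different modular flows on two different algebras before establishing that they are related. These technical subtleties are what make Takesaki's argument delicate, and are the reason the Petz construction, rather than a more elementary direct approach, is forced upon us.
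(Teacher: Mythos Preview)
The paper does not supply its own proof of this statement: Theorem~\ref{thm:Takesaki} is quoted as a classical result of Takesaki, with a citation to \cite{Takesaki:72}, and is then used as a black box to interpret the exact entanglement-wedge-reconstruction picture. There is therefore nothing in the paper to compare your proposal against.

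That said, your outline is a faithful sketch of the standard argument and is essentially correct. The $(\Leftarrow)$ direction via $[P,\Delta_\varphi^{it}]=0$, the identification of the restricted modular data on $\calk$, and the construction of $E$ through the Petz formula $J_0PJ_\varphi(\cdot)J_\varphi PJ_0$ is exactly Takesaki's route; your verification that this lands in $N$ and is a $\varphi$-preserving conditional expectation is the right checklist. For $(\Rightarrow)$, the identity $E(A)\xi_\varphi=PA\xi_\varphi$ is the correct anchor, and deducing $\sigma_t^\varphi|_N=\sigma_t^\psi$ by a KMS comparison is one of the standard ways to finish (an equivalent alternative is to show directly that the Connes cocycle $[D(\varphi\!\circ\! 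E):D\varphi]_t$ is trivial). Your uniqueness argument via the separating property of $\xi_\varphi$ is also the standard one. The caveats you flag about unbounded operators and domain issues in passing from $[P,\Delta_\varphi^{it}]=0$ to $[P,J_\varphi]=0$ are real, but they are handled by the usual graph-closure argument for $S_\varphi$ restricted to $\calk$, and do not indicate any gap in your strategy.
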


From this perspective, in the code subspace formalism, all states of the form $u\ket{\psi}$, with $\psi\in\mathcal{H}_{code}$, are preserved under the conditional expectation $E$. Then, in particular, $E$ corresponds to the Petz map of these states, and the algebra $\iota(M_{code})$ is preserved under modular flow. Under the conditions of Takesaki's theorem, it is then possible to prove the following: for all faithful normal states $\omega,\varphi$ on $M_{phys}$ induced by vectors of the form $u\ket{\omega},u\ket{\varphi}\in\mathcal{H}_{phys}$ with $\ket{\omega},\ket{\varphi}\in\mathcal{H}_{code}$ \cite{OhyaPetz},

(i) Connes cocycles are conserved:
\begin{align}
    [D\omega:D\varphi]_t=[D\omega|_{\iota\left(M_{code}\right)}:D\varphi|_{\iota\left(M_{code}\right)}]_t.
\end{align}

(ii) Relative entropies are conserved: 
\begin{align}
    \mathcal{S}_{\omega,\phi}(M_{phys})=\mathcal{S}_{\omega|_{\iota\left(M_{code}\right)},\phi|_{\iota\left(M_{code}\right)}}(\iota(M_{code})).
\end{align}

(iii) Petz maps coincide: 
\begin{align}
    E_\varphi=E_\omega.
\end{align}

\subsection{Limitations of the exact approach}

We briefly comment on how the exact approach is insufficient to capture some crucial aspects of entanglement wedge reconstruction in AdS/CFT. First, one has to bear in mind that the AdS/CFT duality is a large $N$ statement from a CFT perspective: as $N$ is taken to be large but finite, there will be stringy corrections to the ideal picture of a reconstruction through a conditional expectation, and this will make the picture to break down. In other words, making Newton's constant $G_N$ finite breaks the exact quantum error correcting code structure. This is not a mere technicality and it has been pointed out in \cite{Kelly:2016edc,Hayden:2018khn,Penington:2019npb} that handling approximation is crucial for understanding the semiclassical limit of AdS/CFT and accounts for important physical aspects of the correspondence. In particular,
\begin{itemize}
    \item The redundancy in the bulk-to-boundary encoding can only be consistent with the Reeh-Schlieder theorem on the boundary if entanglement wedge reconstruction is approximate (with an error that may be nonperturbative in $G_N$) \cite{Kelly:2016edc}.
    \item For a state-independent recovery to be possible in the approximate setting, a local bulk operator has to be in the entanglement wedge of \textit{all pure and mixed states} of the corresponding boundary region (this is called the \textit{reconstruction wedge}) \cite{Hayden:2018khn}. This feature cannot be seen through an exact approach.
    \item The presence of nonperturbative errors in bulk reconstruction is crucial to the resolution of the black hole information paradox \cite{Penington:2019npb}.
\end{itemize}

Our goal is then to generalize, to an approximate setting, the relation between entanglement wedge reconstruction and the relative entropy equivalence between bulk and boundary. In turn, this requires borrowing and extending the concept of approximate complementary recovery, introduced in the context of AdS/CFT in \cite{Hayden:2018khn}. In the remainder of the paper, we combine this toolkit and the operator-algebraic perspective to extend the results from \cite{Kang:2018xqy,Kang:2019dfi,Gesteau:2020rtg} to an approximate setting.

In order to reach this objective, we first need to introduce some vocabulary regarding correctability and privacy for infinite-dimensional von Neumann algebras, as well as an important result due to Crann--Kribs--Levene--Todorov \cite{CKLT}, which will be at the heart of our proof, and can be seen as an infinite-dimensional generalization of the information-disturbance tradeoff theorem \cite{Kretschmann}. We will present and utilize this machinery to derive an approximate relation between entanglement wedge reconstruction and relative entropy conservation between bulk and boundary.

\section{Private and correctable algebras and complementary recovery}
\label{sec:privacycorrectability}

The strongest argument for entanglement wedge reconstruction is the argument of Dong--Harlow--Wall in \cite{Dong:2016eik}, which derives entanglement wedge reconstruction from the modified-JLMS formula. As demonstrated in \cite{Hayden:2018khn}, there are many subtleties related to the problem of generalizing the argument of \cite{Dong:2016eik} to the approximate case; in particular, an operator can be reconstructed on the boundary in a state-independent way only if it is in the entanglement wedge of all pure and mixed states in the code subspace. If indeed this is the case, then the main ingredient for proving that the modified-JLMS formula implies approximate reconstruction is the information-disturbance tradeoff \cite{Kretschmann}. 

Here, we introduce a framework which allow us to generalize this notion to the case of an infinite-dimensional boundary Hilbert space for the CFT, as expected in any quantum field theoretic setting. The main ingredients for constructing this Hilbert space setting involve the notions of completely bounded norm, (approximately) private and correctable algebras, as well as a theorem by Crann, Kribs, Levene and Todorov \cite{CKLT}. In particular, we take the norm to be the completely bounded norm, as it is the one appropriate for state-independent recovery. In fact, the completely bounded norm takes into account an extra tensor factor representing an auxiliary system, and that is crucial for recreating all pure and mixed states on the code subspace. As such, a completely bounded norm bound will quantify the quality of a recovery quantum channel with respect to all states at the same time. We expect the completely bounded norm to be bounded whenever the we can perform a recovery in a state-independent fashion. We utilize a region called \emph{reconstruction wedge} to describe the appropriate geometric region of the semi-classical gravitational dual where state-independent recovery is possible.

\subsection{The completely bounded norm, privacy, and correctability}

In order to utilize a von Neumann algebra construction for our setup, we need to introduce a new norm for bounded linear maps, which is called \emph{completely bounded norm}.

\begin{defn}
Let $\varphi:M\rightarrow N$ be a bounded linear map between two von Neumann algebras. We define the \textit{completely bounded norm} of $\varphi$ as $$\|\varphi\|_{cb}:=\mathrm{sup}_{k\in\mathbb{N}}\|\varphi\otimes Id_{\mathcal{M}_k(\mathbb{C})}\|.$$ 
If this quantity is finite, then we say that $\varphi$ is \textit{completely bounded}.
\end{defn}
In particular, one can prove that a quantum channel (i.e., a normal, unital, completely positive map) is always completely bounded. Another nice property (called Smith's lemma) is that if $N$ is of the form $\mathcal{M}_n(\mathbb{C})$, then the supremum in the definition can be only taken for $k\leq n$.

The completely bounded norm is the ``Heisenberg picture" equivalent of the notion of diamond norm, which appears more often in the quantum information literature. However, the completely bounded norm is more natural in our operator-algebraic setting, and also reacts better with the conditional expectation-like structure we will use.

We now give the definition of a private algebra with respect to a quantum channel (i.e., a normal, unital, completely positive map).

\begin{defn}
Let $\mathcal{H}$ be a Hilbert space, $M$ be a von Neumann algebra, and $\mathcal{E}:M\rightarrow \mathcal{B}(\mathcal{H})$ a quantum channel. A von Neumann algebra $N\subset\mathcal{B}(\mathcal{H})$ is said to be \textit{private} for $\mathcal{E}$ if $\mathcal{E}(M)\subset N^\prime$. For $\varepsilon>0$, $N$ is said to be $\varepsilon$\textit{-private} for $\mathcal{E}$ if there exists another quantum channel $\mathcal{P}:M\rightarrow \mathcal{B}(\mathcal{H})$ such that $N$ is private for $\mathcal{P}$ and $\|\mathcal{E}-\mathcal{P}\|_{cb}\leq \varepsilon$.
\end{defn}

The interpretation of a private algebra for a given quantum channel is that none of the information it contains is accessible from the domain of the quantum channel. For instance, in an exact idealization of AdS/CFT, if $A$ is a boundary region with entanglement wedge $a$, then the bulk algebra of $a$ is private with respect to the complement of $A$ for the usual boundary to bulk map. We will show that a similar statement holds in the approximate case, in a somewhat smaller region called the \textit{reconstruction wedge}. 

The ``dual" notion of privacy is that of correctability, which is given by the following definition: 

\begin{defn}
Let $\mathcal{H}$ be a Hilbert space, $M$ be a von Neumann algebra, and $\mathcal{E}:M\rightarrow \mathcal{B}(\mathcal{H})$ a quantum channel. A von Neumann algebra $N\subset\mathcal{B}(\mathcal{H})$ is said to be \textit{correctable} for $\mathcal{E}$ if there exists a quantum channel $\mathcal{R}:N\rightarrow M$ such that $\mathcal{E}\circ\mathcal{R}=Id_N.$ For $\varepsilon>0$, $N$ is said to be $\varepsilon$\textit{-correctable} for $\mathcal{E}$ if there exists a quantum channel $\mathcal{R}:N\rightarrow M$ such that $\|\mathcal{E}\circ\mathcal{R}-Id_N\|_{cb}\leq\varepsilon.$
\end{defn}

In the exact setting, correctability of the entanglement wedge $a$ of region $A$ corresponds to the existence of the $\ast$-homomorphism $\iota$ between the algebra $M_{code}$ and $M_{phys}$, respectively corresponding to $a$ and $A$.

Note that both in the definition of a private algebra and in the definition of a correctable algebra, we have resorted to the completely bounded norm, rather than the usual operator norm. Such a distinction is crucial in order for the next theorem to work, and can be traced back to the fact that state-independent reconstruction of a bulk operator is only possible if it is in the entanglement wedge of \textit{both pure and mixed} states in the code subspace.

\subsection{A duality between privacy and correctability}

If $\mathcal{H}$ is a Hilbert space, $M$ a von Neumann algebra, and $\mathcal{E}:M\rightarrow \mathcal{B}(\mathcal{H})$ a quantum channel, recall that by the Stinespring dilation theorem \cite{Stinespring}, there exist a Hilbert space $\mathcal{K}$, a representation $\pi$ of $M$ on $\mathcal{K}$ and an isometry $V:\mathcal{H}\rightarrow\mathcal{K}$ such that for $A\in M$,
\begin{align}
    \mathcal{E}(A)=V^\dagger\pi(A)V.
\end{align} 
Following \cite{CKLT}, we call the datum $(\pi,V,\mathcal{K})$ a Stinespring triple for $\mathcal{E}$. Given a Stinespring triple, it is natural to define the complementary channel of a channel $\mathcal{E}$ on the commutant of $\pi(M)$, which we give below as Definition \ref{def:complementarychannel}.

\begin{defn}
\label{def:complementarychannel}
Let $\mathcal{H}$ be a Hilbert space, $M$ be a von Neumann algebra, and $\mathcal{E}:M\rightarrow \mathcal{B}(\mathcal{H})$ a quantum channel. Let $(\pi,V,\mathcal{K})$ be a Stinespring triple for $\mathcal{E}$. We define the complementary channel for the triple $(\pi,V,\mathcal{K})$,  $\mathcal{E}^c:\pi(M)^\prime\rightarrow\mathcal{B}(\mathcal{H})$, by 
\begin{align*}
    \mathcal{E}^c(A):=V^\dagger AV.
\end{align*} 
\end{defn}
With this definition in hand, we now have the tools to describe the main theorem of this section, which is the crucial ingredient of our discussion of approximate recovery in the reconstruction wedge, by Crann, Kribs, Levene and Todorov: 

\begin{thm}[Crann--Kribs--Levene--Todorov\cite{CKLT}]
\label{thm:CKLT}
Let $\mathcal{H}$ be a Hilbert space, $M$ be a von Neumann algebra on a Hilbert space $\mathcal{K}$, and $\mathcal{E}:M\rightarrow \mathcal{B}(\mathcal{H})$ a quantum channel. If a von Neumann subalgebra $N\subset \mathcal{B}(\mathcal{H})$ is $\varepsilon$-private (resp. $\varepsilon$-correctable) for $\mathcal{E}$, then it is $2\sqrt{\varepsilon}$-correctable (resp. $8\sqrt{\varepsilon}$-private) for any complementary channel to $\mathcal{E}$.
\end{thm}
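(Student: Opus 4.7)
The theorem is a quantitative, operator-algebraic generalization of the exact duality that $N$ is private for $\mathcal{E}$ if and only if $N$ is correctable for a complementary channel $\mathcal{E}^c$. My plan is to reduce the approximate statement to the exact statement by perturbing the channel, and then to control the perturbation at the level of complementary channels via a Stinespring continuity estimate in the completely bounded norm. Throughout I will work with the Heisenberg-picture formulation (normal unital CP maps) so that the cb-norm is the appropriate metric.

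\emph{Step 1: the exact duality.} First I would establish the baseline: if $N$ is (exactly) private for $\mathcal{E}$, then for a Stinespring triple $(\pi,V,\mathcal{K})$ the inclusion $\mathcal{E}(M)\subset N'$ forces $V^\dagger \pi(M) V \subset N'$. A short computation then shows that the map $\mathcal{E}^c:\pi(M)'\to\mathcal{B}(\mathcal{H})$ defined by $\mathcal{E}^c(A)=V^\dagger A V$ restricts to a unital normal $^\ast$-homomorphism on a suitable subalgebra containing $N$, which yields a quantum channel $\mathcal{R}:N\to\pi(M)'$ with $\mathcal{E}^c\circ\mathcal{R}=Id_N$. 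Conversely, if $N$ is exactly correctable for some $\mathcal{E}^c$, then dually $N$ is private for $\mathcal{E}$.

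\emph{Step 2: Stinespring continuity in cb-norm.} This is the technical heart. I would invoke (or reprove in the von Neumann setting) an analogue of the Kretschmann--Schlingemann--Werner continuity theorem: if $\mathcal{E},\mathcal{P}:M\to\mathcal{B}(\mathcal{H})$ are two channels with $\|\mathcal{E}-\mathcal{P}\|_{cb}\leq\varepsilon$, then one may choose Stinespring triples $(\pi,V_{\mathcal{E}},\mathcal{K})$ and $(\pi,V_{\mathcal{P}},\mathcal{K})$ (on a common dilation, with a common representation $\pi$) satisfying $\|V_{\mathcal{E}}-V_{\mathcal{P}}\|\leq\sqrt{\varepsilon}$. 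Taking adjoints and using $\|V_{\mathcal{E}}\|=\|V_{\mathcal{P}}\|=1$, a standard two-term estimate yields $\|\mathcal{E}^c-\mathcal{P}^c\|_{cb}\leq 2\sqrt{\varepsilon}$ for the induced complementary channels on $\pi(M)'$.

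\emph{Step 3: privacy implies correctability.} Suppose $N$ is $\varepsilon$-private for $\mathcal{E}$, so there is a channel $\mathcal{P}$ with $N$ exactly private for $\mathcal{P}$ and $\|\mathcal{E}-\mathcal{P}\|_{cb}\leq\varepsilon$. By Step 1 there exists a quantum channel $\mathcal{R}:N\to\pi(M)'$ with $\mathcal{P}^c\circ\mathcal{R}=Id_N$. Since $\mathcal{R}$ is a channel, $\|\mathcal{R}\|_{cb}=1$, so
\begin{align*}
\|\mathcal{E}^c\circ\mathcal{R}-Id_N\|_{cb}
=\|(\mathcal{E}^c-\mathcal{P}^c)\circ\mathcal{R}\|_{cb}
\leq \|\mathcal{E}^c-\mathcal{P}^c\|_{cb}\leq 2\sqrt{\varepsilon},
\end{align*}
using Step 2. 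This establishes $2\sqrt{\varepsilon}$-correctability of $N$ for $\mathcal{E}^c$.

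\emph{Step 4: correctability implies privacy.} This direction is the subtler one and I expect it to be the main obstacle, since we must now manufacture a channel close to $\mathcal{E}$ (rather than to $\mathcal{E}^c$) for which $N$ is exactly private; the worse constant $8$ reflects the extra round-trip. My plan is the following: given $\mathcal{R}:N\to M$ with $\|\mathcal{E}\circ\mathcal{R}-Id_N\|_{cb}\leq\varepsilon$, apply Stinespring continuity (Step 2) to the pair $(\mathcal{E}\circ\mathcal{R},Id_N)$ to get isometries $V_{\mathcal{E}\circ\mathcal{R}},V_{Id_N}$ within $\sqrt{\varepsilon}$ of each other, and hence complementary channels $(\mathcal{E}\circ\mathcal{R})^c$ and $Id_N^c$ within $2\sqrt{\varepsilon}$ in cb-norm. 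Since $Id_N^c$ is (up to unitary equivalence of dilations) a trace-like channel that lands in $N'$, this identifies a channel close to $\mathcal{E}^c\circ\mathcal{R}^c$ landing in $N'$, and dilation manipulations together with another application of Stinespring continuity will propagate the error back to a channel $\mathcal{Q}$ near $\mathcal{E}$ for which $N$ is exactly private. The accumulated constants give $\sqrt{\varepsilon}\mapsto 2\sqrt{\varepsilon}\mapsto 4\sqrt{\varepsilon}\mapsto 8\sqrt{\varepsilon}$ through the repeated square-root passages and the two-term triangle estimates.

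\emph{Main obstacle.} The key technical point throughout is the cb-norm Stinespring continuity of Step 2 for channels between general von Neumann algebras rather than matrix algebras: minimality of the dilation is delicate, and one must verify that the representations $\pi$ can indeed be chosen common without loss of generality. Once this continuity is available, the rest of the proof is a careful accounting of triangle inequalities for cb-norms and the exact duality of Step 1.
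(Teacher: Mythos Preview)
The paper does not prove this theorem; it is quoted from \cite{CKLT} as an external result and used as a black box in the derivation of Theorem~\ref{thm:main}. There is therefore no proof in the paper to compare your proposal against.

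That said, your strategy is essentially the one carried out in \cite{CKLT}: reduce to the exact privacy/correctability duality, and control the perturbation of complementary channels via a von Neumann algebraic version of the Kretschmann--Schlingemann--Werner Stinespring continuity theorem. You have correctly identified the main technical obstacle as establishing cb-norm continuity of the Stinespring dilation for normal unital completely positive maps between general von Neumann algebras (rather than matrix algebras), and in particular arranging for a common representation $\pi$ on a common dilation space. Your Step~4 sketch is somewhat vague about how the dilation manipulations propagate the error from $(\mathcal{E}\circ\mathcal{R})^c$ back to a channel near $\mathcal{E}$ itself, and the constant bookkeeping there would need to be made precise, but the overall architecture matches the original argument.
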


This result is far from trivial, and states that there is an equivalence between being approximately private (i.e., almost no information in the bulk region is accessible from the boundary region), and having an approximately correctable commutant (i.e., almost all information in the bulk region can be reconstructed in the complementary boundary region), in a Stinespring representation. 

We now pause for a moment to consider the specific case of $\varepsilon=0$, before moving on to the approximate case. The case $\varepsilon=0$ shows that being \textit{exactly} private for a given quantum channel implies being \textit{exactly} correctable for any complementary channel. In holographic settings, we retrieve the result of Dong--Harlow--Wall for the case of finite-dimensional Hilbert spaces \cite{Dong:2016eik}. This provides a check on our infinite-dimensional construction as a natural generalization of the gravitational recovery. 

Indeed, in the code subspace formalism, consider a von Neumann algebra $M_A$ acting on $\mathcal{H}_{phys}$, and its commutant $M^\prime_A$. Introduce the code subspace injection $V:\mathcal{H}_{code}\rightarrow\mathcal{H}_{phys}$. Then we denote $M_a$ and $M_a^\prime$ as the bulk entanglement wedge algebras that correspond to $A$ and its complement. Since relative entropy has to be conserved, for any two states $\rho$ and $\varphi$ with finite relative entropy in the code subspace, 
\begin{align}
    S_{\rho\circ \mathcal{E}^c,\;\varphi\circ\mathcal{E}^c}(M^\prime_{A})=S_{\rho,\varphi}(M^\prime_{a}).
\end{align}
Now suppose that there exists a normal norm-one projection $\mathcal{P}_{a^\prime}$ from $\mathcal{B}(\mathcal{H}_{code})$ onto $M^\prime_a$. In particular, for $\varphi=\rho\circ\mathcal{P}_{a^\prime}$, the relative entropy conservation formula gives 
\begin{align}
    S_{\rho\circ \mathcal{E}^c,\;\rho\circ\mathcal{P}_{a^\prime}\circ\mathcal{E}^c}(M^\prime_{A})=0.
\end{align}
This implies that the algebra $M_a$ is private for the quantum channel $\mathcal{E}^c$, so it is correctable for the quantum channel $\mathcal{E}$. The same argument holds for $M_a^\prime$ as long as there exists a normal norm one projection $\mathcal{P}_a$.\footnote{We will see that this is equivalent to $M_a^\prime$ being a product of type $I$ factors.} Thus, we have proven complementary recovery, in an operator-algebraic way that closely echoes the Dong--Harlow--Wall argument \cite{Dong:2016eik}. 
We thus obtain an exact result on the exact holographic recovery in Theorem \ref{thm:exactbest}. This new exact theorem is similar to Theorem \ref{thm:summary} but now it is in a setting that can be naturally generalized to include nonperturbative gravity corrections to give rise to Theorem \ref{thm:main}.
\begin{thm}
\label{thm:exactbest}
Let $V:\mathcal{H}_{code}\rightarrow\mathcal{H}_{phys}$ be an isometry, let $M_{code}$ be a von Neumann algebra on $\mathcal{H}_{code}$ and $M_{phys}$ be a von Neumann algebra on $\mathcal{H}_{phys}$. Let $\mathcal{E}$ be the quantum channel $V^\dagger(\cdot)V$ on $M_{phys}$. Suppose that there exists a norm one projection $\mathcal{P}_{code}:M^\prime_{phys}\rightarrow M^\prime_{code}$, and that for all normal states $\rho$ and $\varphi$ on $M^\prime_{code}$, $$S_{\rho\circ \mathcal{E}^c,\;\varphi\circ\mathcal{E}^c}(M^\prime_{phys})=S_{\rho,\varphi}(M^\prime_{code}).$$ Then $M_{code}$ is exactly correctable from $M_{phys}$.
\end{thm}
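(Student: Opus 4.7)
The plan is to import, as a clean theorem, the operator-algebraic Dong--Harlow--Wall style argument sketched in the paragraph preceding the statement: use the norm-one projection $\mathcal{P}_{code}$ to manufacture, for every normal state $\rho$, a witness state that agrees with $\rho$ on $M'_{code}$; feed the pair into the hypothesized relative entropy equivalence to force vanishing relative entropy on $M'_{phys}$; read off privacy of $M_{code}$ for $\mathcal{E}^c$; and finally invoke Theorem \ref{thm:CKLT} at $\varepsilon=0$ to convert privacy for $\mathcal{E}^c$ into correctability for $\mathcal{E}$.

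Concretely, for a fixed normal state $\rho$, I would set $\varphi:=\rho\circ\mathcal{P}_{code}$. Since $\mathcal{P}_{code}$ is a norm-one projection onto $M'_{code}$, it acts as the identity on $M'_{code}$, so the restrictions of $\rho$ and $\varphi$ to $M'_{code}$ coincide and $S_{\rho,\varphi}(M'_{code})=0$. The hypothesis of the theorem then gives
\begin{equation*}
S_{\rho\circ\mathcal{E}^c,\,\varphi\circ\mathcal{E}^c}(M'_{phys})\ =\ 0.
\end{equation*}
Faithfulness of Araki's relative entropy (equivalently, the Pinsker-type inequality of Ohya--Petz \cite{OhyaPetz}) forces $\rho\circ\mathcal{E}^c=\rho\circ\mathcal{P}_{code}\circ\mathcal{E}^c$ as states on $M'_{phys}$. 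Letting $\rho$ range over all normal states, which separate points of any von Neumann algebra acting on $\mathcal{H}_{code}$, I would conclude
\begin{equation*}
\mathcal{E}^c(A)\ =\ \mathcal{P}_{code}(\mathcal{E}^c(A))\ \in\ M'_{code} \qquad \text{for every } A\in M'_{phys},
\end{equation*}
i.e., $\mathcal{E}^c(M'_{phys})\subset M'_{code}$, which is exactly the statement that $M_{code}$ is private for $\mathcal{E}^c$.

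To close the argument, note that $\mathcal{E}$ is itself a complementary channel for $\mathcal{E}^c$ (the Stinespring data of Definition \ref{def:complementarychannel} is symmetric between the two), so Theorem \ref{thm:CKLT} applied with $\varepsilon=0$ turns the exact privacy of $M_{code}$ for $\mathcal{E}^c$ into the existence of a channel $\mathcal{R}:M_{code}\to M_{phys}$ with $\mathcal{E}\circ\mathcal{R}=Id_{M_{code}}$, i.e., exact correctability. The step I expect to be most delicate is the passage from vanishing relative entropy to literal equality of the pushed-forward states on $M'_{phys}$: one must verify that $\rho\circ\mathcal{E}^c$ and $\varphi\circ\mathcal{E}^c$ fall within a regime where Araki's relative entropy is well-defined and faithful on $M'_{phys}$, even when the underlying vector representatives need not be cyclic and separating. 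Once that technical point is handled via the general Ohya--Petz framework, the rest of the chain (hypothesis $\Rightarrow$ privacy $\Rightarrow$ correctability) is essentially automatic.
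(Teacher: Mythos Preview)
Your proposal is correct and follows essentially the same route as the paper's argument: choose $\varphi=\rho\circ\mathcal{P}_{code}$, use the relative entropy hypothesis to force $S_{\rho\circ\mathcal{E}^c,\varphi\circ\mathcal{E}^c}(M'_{phys})=0$, read off that $M_{code}$ is private for $\mathcal{E}^c$, and then invoke Theorem~\ref{thm:CKLT} at $\varepsilon=0$ to obtain exact correctability for $\mathcal{E}$. The only addition in your write-up is that you make explicit the intermediate steps (Pinsker/faithfulness and separation by normal states) that the paper compresses into the single sentence ``this implies that the algebra $M_a$ is private for $\mathcal{E}^c$.''
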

In particular, when conservation of relative entropy and the existence of a norm one projection are satisfied for both $M_{phys}$ and $M^\prime_{phys}$, the exact complementary recovery can be realized.

\section{State-independent approximate recovery in the reconstruction wedge}

With the theorems and results of Section \ref{sec:privacycorrectability} at hand, we have all the ingredients needed to derive a new result incorporating nonperturbative gravitational effects on the recovery. This requires the setting of \emph{approximate recovery} with respect to reconstruction of bulk operators in the \emph{reconstruction wedge}. This can be viewed both as a direct extension of infinite-dimensional exact recovery \cite{Gesteau:2020rtg} to a precise approximate recovery setup, that includes nonperturbative gravitational errors, and an infinite-dimensional generalization of finite-dimensional approximate recovery \cite{Hayden:2018khn}.

\subsection{Setup}
\label{sec:setup}

We consider a time slice of a CFT on the boundary of some asymptotically AdS spacetime, which we divide into two subregions $A$ and $\overline{A}$. We associate to $A$ and $\overline{A}$ von Neumann algebras $M_A$ and $M_{\overline{A}}$ acting on the CFT Hilbert space $\mathcal{H}_{phys}$, and we suppose that Haag duality \cite{Haag} is valid on the boundary:
\begin{align}
    M_{\overline{A}}=M_A^\prime.
\end{align}
%\textcolor{red}{By the Reeh-Schlieder theorem, the vacuum state is cyclic and separating for $M_A$.} 
We consider a bulk Hilbert space $\mathcal{H}_{code}$, as well as an isometry
\begin{align}
    V:\mathcal{H}_{code}\rightarrow\mathcal{H}_{phys}.
\end{align}
In what follows, we are interested in the quantum channel $\mathcal{E}$ given by
\begin{align} 
\begin{aligned}
\mathcal{E}:M_A&\longrightarrow \mathcal{B}(\mathcal{H}_{code}), \\
A&\longmapsto V^\dagger AV,
\end{aligned}
\end{align} 
for this isometry $V$, as well as its complementary channel $\mathcal{E}^c$ for the same isometry $V$ as
\begin{align} 
\begin{aligned}
\mathcal{E}^c:M_A^\prime&\longrightarrow \mathcal{B}(\mathcal{H}_{code}), \\
A&\longmapsto V^\dagger AV.
\end{aligned}
\end{align} 
Note that in the exact setting, these maps composed with their recovery channels would correspond to the holographic conditional expectation, and they would exactly map $M_A$ to the bulk algebra $M_a$ of its entanglement wedge, and $M_{\overline{A}}$ to the bulk algebra $M_{\overline{a}}$ of its entanglement wedge. 

In our new setup, the situation will be slightly more complicated; however, one can still associate a bulk algebra $M_a$ to the \textit{reconstruction wedge} of $A$, which corresponds to the intersection of the entanglement wedges of $A$ with respect to all \textit{mixed} states in the code subspace.\footnote{The necessity to resort to the reconstruction wedge rather than the entanglement wedge in order to obtain a state-independent reconstruction was first pointed out by Hayden and Penington in \cite{Hayden:2018khn}.} In order to take these mixed states into account (and to give them a physical interpretation), the easiest resolution is to introduce a copy of the code subspace $\mathcal{H}_{code}^{\ast}$ and consider the doubled Hilbert space $\mathcal{H}_{code}\otimes\mathcal{H}_{code}^\ast$. This doubled Hilbert space can be obtained from the GNS representation of any faithful normal (i.e. KMS) state on $\mathcal{B}(\mathcal{H}_{code})$.\footnote{For more on this technology, please see \cite{Gesteau:2020rtg}.} The maps $\mathcal{E}$ and $\mathcal{E}^c$ can then be canonically extended such that
\begin{subequations}
\begin{align}
    \mathcal{E} &\longrightarrow \mathcal{E}\ \otimes Id\quad\text{on}\quad M_A\otimes\mathcal{B}(\mathcal{H}_{code}^\ast) ,\\
    \mathcal{E}^c &\longrightarrow \mathcal{E}^c\otimes Id\quad\text{on}\quad M_A^\prime\otimes\mathcal{B}(\mathcal{H}_{code}^\ast) .
\end{align}
\end{subequations}
Furthermore, any normal mixed state on $\mathcal{B}(\mathcal{H}_{code})$ has a vector purification on $\mathcal{H}_{code}\otimes\mathcal{H}_{code}^\ast$. For technical reasons associated to the definition of the completely bounded norm and subtleties regarding tensor products of operator algebras in infinite-dimensions, in the case in which $\mathcal{H}_{code}$ is infinite-dimensional, we will choose $\mathcal{H}^\ast_{code}$ to be finite-dimensional of arbitrarily large dimension. This will not change the physics: being able to adjoin an arbitrarily large reference system to $\mathcal{H}_{code}$ will be enough to guarantee state-independent approximate recovery. In fact, this amounts to giving an arbitrarily precise approximation to an infinite-dimensional copy of $\mathcal{H}_{code}$.

\subsection{Main result}

Utilizing the setup of Section \ref{sec:setup}, we now prove our main theorem regarding approximate recovery in the reconstruction wedge for infinite-dimensional Hilbert spaces, which will be coming from the correctability and privacy correspondence. We start by applying the modified-JLMS formula to two well-chosen code states. Let $a$ denote the reconstruction wedge of $A$ and $a^\prime$ denote its complement in the bulk. Let $\mathcal{P}_{a^\prime}$ denote a normal projection of norm one from $\mathcal{B}(\mathcal{H}_{code})$ onto $M_{a^\prime}$. Before we set up further structures, it is important to note that such a projection will not exist for all possible types of von Neumann algebras. While we do not need further restrictions on the boundary von Neumann algebra, the bulk algebra cannot be completely general. The right assumption to guarantee the existence of such a projection is to suppose that
$$
M_{a^\prime}\,\text{ is \textit{purely atomic}},
$$
i.e. a product of type I factors \cite{Blackadar}. This is slightly more restrictive than just asking for $M_{a^\prime}$ to have type I, as not every direct integral of type I factors can be written as a product. However, finite-dimensional von Neumann algebras, as well as every finite or countable direct sum of type I factors, will satisfy this property. We expect this to be enough to model the bulk algebras in physically relevant situations.

Let $\rho$ be a normal state on $\mathcal{B}(\mathcal{H}_{code}\otimes\mathcal{H}_{code}^\ast)$. Then, we note that $\rho\circ (\mathcal{E}^c\otimes Id)$ is a state on $M_{\overline{A}}\otimes\mathcal{B}(\mathcal{H}_{code}^\ast)=M_A^\prime\otimes\mathcal{B}(\mathcal{H}_{code}^\ast)$.\footnote{Note that defining the tensor product of von Neumann algebras is subtle, and that there are a lot of different ways to do so. Here, we do not expand on these subtleties, as $\mathcal{H}_{code}^\ast$ is finite-dimensional, and all definitions coincide in this case.} We will apply the modified-JLMS formula for relative entropy to the two states $\rho$ and $\rho\circ (\mathcal{P}_{a^\prime}\otimes Id)$: 
\begin{align}
\begin{aligned}
|S_{\rho\circ (\mathcal{E}^c\otimes Id),\rho\circ (\mathcal{P}_{a^\prime}\otimes Id)\circ (\mathcal{E}^c\otimes Id)}(M_A^\prime\otimes\mathcal{B}(\mathcal{H}_{code}^\ast))-S_{\rho,\rho\circ (\mathcal{P}_{a^\prime}\otimes Id)}(M_{EW(\rho\circ (\mathcal{P}_{a^\prime}\otimes Id), \overline{A}\cup R)})\\
+S_{gen}(\rho,EW(\rho, \overline{A}\cup R))-S_{gen}(\rho,EW(\rho\circ (\mathcal{P}_{a^\prime}\otimes Id), \overline{A}\cup R))|&&\leq\varepsilon,
\end{aligned}
\label{eqn:setupJLMS}
\end{align}
where $\varepsilon$ is nonperturbatively small in $G_N$. Indeed, while we can naively assume that the highest order correction to this inequality is of order $G_N$, the quantum extremal surface prescription ensures that strictly non-zero nonperturbative gravitational errors persist in entanglement wedge reconstruction, with a lower bound \cite{Hayden:2018khn}
\begin{align}
    \varepsilon\sim e^{-\kappa/G_N}\,,\quad\kappa>0.
\end{align}
These nonperturbative gravitational errors share their origin with the derivation of the Page curve \cite{Penington:2019npb}: saddles with nontrivial topologies can only appear when nonperturbative effects are taken into account in the gravitational path integral. This is in contrast with the (H)RT prescription, which is only valid up to $O(1)$.

As we shall show, the second term of the left hand side of equation \eqref{eqn:setupJLMS} is zero; henceforth, this term is finite, which allows us to apply this formula. Now we analyze the second, third, and fourth terms of this equation \eqref{eqn:setupJLMS}. In the second term, $EW(\rho\circ (\mathcal{P}_{a^\prime}\otimes Id), \overline{A}\cup R)$ denotes the entanglement wedge of the boundary region $\overline{A}$ and the reference system $R$ in the bulk state $\rho\circ (\mathcal{P}_{a^\prime}\otimes Id)$. Recall that the region $a^\prime$ in the bulk is the complement of the intersection of all entanglement wedges in pure and mixed states for $A$. It follows that no matter how $\rho$ is chosen, $EW(\rho\circ (\mathcal{P}_{a^\prime}\otimes Id),\overline{A}\cup R)$ is always contained in $a^\prime\cup R$. Hence, the states $\rho$ and $\rho\circ (\mathcal{P}_{a^\prime}\otimes Id)$ coincide on it. Thus, we can conclude that the second term is zero.

The third and fourth terms correspond to the generalized entropies, i.e. the sum of the area term and the bulk correction, for the two states $\rho$ and $\rho\circ (\mathcal{P}_{a^\prime}\otimes Id)$. This may naively pose a problem as we do not know yet how to define a suitable regularization for these generalized entropies in the infinite-dimensional setting. Luckily, however, it does not matter here: as the states $\rho$ and $\rho\circ (\mathcal{P}_{a^\prime}\otimes Id)$ always coincide on the entanglement wedge of $\overline{A}\cup R$ for all states, they give rise to the same quantum extremal surface, and the third and last terms of the equation \eqref{eqn:setupJLMS} cancel out.

Incorporating these results into the equation \eqref{eqn:setupJLMS}, we obtain a much simpler identity:
\begin{align}
|S_{\rho\circ (\mathcal{E}^c\otimes Id),\rho\circ (\mathcal{P}_{a^\prime}\otimes Id)\circ (\mathcal{E}^c\otimes Id)}(M_A^\prime\otimes\mathcal{B}(\mathcal{H}_{code}^\ast))|\leq\varepsilon.
\end{align}
By Pinsker's inequality \cite[Proposition 5.23]{OhyaPetz}, this yields
\begin{align}
\label{eqn:Pinsker}
    \|\rho\circ (\mathcal{E}^c\otimes Id)-\rho\circ (\mathcal{P}_{a^\prime}\otimes Id)\circ (\mathcal{E}^c\otimes Id)\|\leq\sqrt{2\varepsilon}.
\end{align}
For convenience, we define the difference in the quantum channel $\mathcal{E}^c$ and its projected quantum channel $\mathcal{P}_{a^\prime}\circ\mathcal{E}^c$ as
\begin{align}
    \mathcal{E}^{diff}\equiv\mathcal{E}^c-\mathcal{P}_{a^\prime}\circ\mathcal{E}^c
\end{align}
such that equation \eqref{eqn:Pinsker} can be rewritten as
\begin{align}
    \label{eqn:diff}
    \|\rho\circ (\mathcal{E}^{diff}\otimes Id)\|\leq\sqrt{2\varepsilon}.
\end{align}
% Now note that 
% \begin{align}
% \|\mathcal{E}^c\otimes Id-(\mathcal{P}_{a^\prime}\otimes Id)\circ (\mathcal{E}^c\otimes Id)\|= \mathrm{sup}_{\|\varphi\|=1}\|\varphi\circ (\mathcal{E}^c\otimes Id)-\varphi\circ (\mathcal{P}_{a^\prime}\otimes Id)\circ (\mathcal{E}^c\otimes Id)\|,
% \end{align}
% where the supremum is taken over all linear functionals of norm-$1$.

We now utilize the following proposition, which we come back to and prove after the main theorem: Theorem \ref{thm:main}.
\begin{prop}
\label{thm:prop}
Let $M$ and $N$ be two von Neumann algebras and let $\Phi:M\rightarrow N$ be a $\ast$-preserving normal map. Then, 
\begin{align}
\|\Phi\|_{cb}=\underset{n\in\mathbb{N}}{\mathrm{sup}}\;\underset{\rho_n}{\mathrm{sup}}\|\rho_n\circ(\Phi\otimes Id_n)\|,\end{align}
where $\rho_n$ varies over normal states on $N\otimes M_n(\mathbb{C})$.
\end{prop}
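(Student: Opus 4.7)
The plan is to verify both inequalities in the claimed identity separately. The easy direction $\|\Phi\|_{cb}\geq\sup_{n,\rho_n}\|\rho_n\circ(\Phi\otimes Id_n)\|$ is immediate from the definition of the completely bounded norm: for any normal state $\rho_n$ on $N\otimes M_n(\mathbb{C})$ one has $\|\rho_n\|=1$, hence
\begin{align*}
\|\rho_n\circ(\Phi\otimes Id_n)\|\ \leq\ \|\rho_n\|\cdot\|\Phi\otimes Id_n\|\ \leq\ \|\Phi\|_{cb}.
\end{align*}

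For the reverse inequality, the strategy is a standard self-adjoint doubling trick designed to let us replace ``normal functional'' by ``normal state.'' Given $\varepsilon>0$, choose $n$ and $X\in M\otimes M_n(\mathbb{C})$ with $\|X\|\leq 1$ so that $\|(\Phi\otimes Id_n)(X)\|\geq\|\Phi\|_{cb}-\varepsilon$. Define
\begin{align*}
\tilde X\ :=\ \begin{pmatrix} 0 & X\\ X^{\ast} & 0\end{pmatrix}\ \in\ M\otimes M_{2n}(\mathbb{C}),
\end{align*}
which is self-adjoint and, by a direct computation of $\tilde X^{\ast}\tilde X$, satisfies $\|\tilde X\|=\|X\|\leq 1$. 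Since $\Phi$ is $\ast$-preserving, so is $\Phi\otimes Id_{2n}$, and hence
\begin{align*}
(\Phi\otimes Id_{2n})(\tilde X)\ =\ \begin{pmatrix} 0 & Y\\ Y^{\ast} & 0\end{pmatrix},\qquad Y:=(\Phi\otimes Id_n)(X),
\end{align*}
is a self-adjoint element of $N\otimes M_{2n}(\mathbb{C})$ of norm exactly $\|Y\|=\|(\Phi\otimes Id_n)(X)\|$.

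The crux of the argument then invokes the standard fact that, for any self-adjoint element $Z$ in a von Neumann algebra $\mathcal{M}$, the operator norm is computed by normal states: $\|Z\|=\sup_{\omega}|\omega(Z)|$, where $\omega$ ranges over normal states on $\mathcal{M}$. This follows from the spectral theorem applied to $Z$ together with the fact that vector states in any faithful normal representation of $\mathcal{M}$ are normal. Applying this to $(\Phi\otimes Id_{2n})(\tilde X)$ yields a normal state $\rho$ on $N\otimes M_{2n}(\mathbb{C})$ with
\begin{align*}
|\rho\circ(\Phi\otimes Id_{2n})(\tilde X)|\ \geq\ \|(\Phi\otimes Id_n)(X)\|-\varepsilon\ \geq\ \|\Phi\|_{cb}-2\varepsilon.
\end{align*}
Combined with $\|\tilde X\|\leq 1$, this gives $\|\rho\circ(\Phi\otimes Id_{2n})\|\geq\|\Phi\|_{cb}-2\varepsilon$, and letting $\varepsilon\to 0$ finishes the proof.

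The main obstacle is precisely the passage from ``norm of an arbitrary element'' to ``supremum over states,'' which is generically false without self-adjointness; the $\ast$-preserving hypothesis on $\Phi$ is what makes the $2\times 2$ block trick work by ensuring the image of $\tilde X$ remains self-adjoint. This is also why the statement must allow the matrix dimension to be taken arbitrarily large, as the construction doubles the size $n\mapsto 2n$.
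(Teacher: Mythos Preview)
Your proof is correct and follows essentially the same approach as the paper: the $2\times 2$ off-diagonal block trick to reduce to self-adjoint elements (the paper's Lemma~5.4), combined with the fact that normal states compute the norm of a self-adjoint element. Your two-inequality framing is slightly more streamlined than the paper's chain of equalities, since by bounding $\|\rho\circ(\Phi\otimes Id_{2n})\|$ directly from below you avoid the separate step (the paper's Lemma~5.3) of showing that a $\ast$-preserving functional attains its norm on self-adjoint elements.
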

Given that $\mathcal{H}^\ast_{code}$ can be chosen to have an arbitrarily high dimension if $\mathcal{H}_{code}$ is infinite-dimensional, in that case this can be rewritten using the completely bounded norm:
\begin{align}
\|\mathcal{E}^{diff}\|_{cb}=\|\mathcal{E}^c-\mathcal{P}_{a^\prime}\circ \mathcal{E}^c\|_{cb}\leq \sqrt{2\varepsilon}.
\end{align}
This is exactly what we want! By definition of $\mathcal{P}_{a^\prime}$, the von Neumann algebra $M_a$ is private for the quantum channel $\mathcal{P}_{a^\prime}\circ \mathcal{E}^c$, so in turn, it is $\sqrt{2\varepsilon}$-private for $\mathcal{E}^c$.

As the last step, by the correctability and privacy duality as presented in Theorem \ref{thm:CKLT}, it follows that the von Neumann algebra $M_a$ is $2(2\varepsilon)^\frac{1}{4}$-correctable for the quantum channel $\mathcal{E}$. In other words, there exists a channel $\mathcal{R}:M_a\rightarrow M_A$ such that 
\begin{equation}
\|\mathcal{E}\circ \mathcal{R}-Id_{M_a}\|_{cb}\leq 2(2\varepsilon)^{\frac{1}{4}}.
\end{equation}

Note that the precise argument of Proposition \ref{thm:prop} cannot be used in the case where $\mathcal{H}_{code}$ is finite-dimensional, as $\mathcal{H}^\ast_{code}$ is only allowed to have dimension up to that of $\mathcal{H}_{code}$. However, an argument based on Jordan decomposition aids us to achieve a similar bound with a slightly less precise prefactor, which can be applied to both finite and infinite dimensional settings.

We introduce $f\in (\mathcal{B}(\mathcal{H}_{code})\otimes\mathcal{B}(\mathcal{H}_{code}^\ast))_\ast$ which is a normal linear functional of norm-$1$. We decompose this complex $f$ with real and imaginary parts such that
\begin{align}
    f=R+iI,\ R=\frac{f+f^\ast}{2},\ I=\frac{f-f^\ast}{2i},
 \end{align}
where $R$ and $I$ are self-adjoint normal linear functionals that capture real and imaginary parts of $f$ respectively. Then, their norms are bounded by the norm of $f$:
 \begin{align}
     \|R\|\leq\|f\|,\ \|I\|\leq \|f\|.
 \end{align}
 Since both $R$ and $I$ are Hermitian linear functionals, by Jordan decomposition \cite{Pedersen}, there exist unique positive linear functionals $R_+, R_-, I_+, I_-\in (\mathcal{B}(\mathcal{H}_{code})\otimes\mathcal{B}(\mathcal{H}_{code}^\ast))_\ast$ such that
\begin{align}
\begin{split}
     R=R_+-R_-\,,&\ I=I_+-I_-\,, \\
    \|R\|=\|R_+\|+\|R_-\|\,,&\ \|I\|=\|I_+\|+\|I_-\|\,.
 \end{split}
 \end{align}
 Putting these together with an insertion of $f$ to equation \eqref{eqn:diff}, we get
 \begin{align}
     \|f\circ (\mathcal{E}^{diff}\otimes Id)\| =\|(R_+-R_-+i(I_+-I_-))\circ (\mathcal{E}^{diff}\otimes Id)\|
 \end{align}
 We apply the triangle inequality:
 \begin{align}
     \begin{split}
    \|f\circ (\mathcal{E}^{diff}\otimes Id)\|
    \leq\ &\|(R_+\circ (\mathcal{E}^{diff}\otimes Id)\| +\|(R_-\circ (\mathcal{E}^{diff}\otimes Id)\| \\
     &\quad +\|(I_+\circ (\mathcal{E}^{diff}\otimes Id)\| +\|(I_-\circ (\mathcal{E}^{diff}\otimes Id)\|
     \end{split}
 \end{align}
 Given that ${R\pm}/{\|R_{\pm}\|}$ and ${I\pm}/{\|R_{\pm}\|}$ are normal states, we can apply this result to equation \eqref{eqn:diff} to conclude that 
 \begin{align}
    \|f\circ (\mathcal{E}^{diff}\otimes Id)\|\leq(\|R_+\|+\|R_-\|+\|I_+\|+\|I_-\|)\sqrt{2\varepsilon}.
 \end{align} 
 By the unity property of states for the Jordan decomposition, this simplifies as
 \begin{align}
    \|f\circ (\mathcal{E}^{diff}\otimes Id)\| \leq2\sqrt{2\varepsilon}.
\end{align}
As this holds for any normal linear functional $f$ of norm-$1$, we can deduce that
 \begin{align}
   \|\mathcal{E}^{diff}\otimes Id \|\leq2\sqrt{2\varepsilon}.
\end{align}
Indeed, we can view the composition of a quantum channel by a normal linear functional as the action of its dual quantum channel on the predual $(\mathcal{B}(\mathcal{H}_{code})\otimes\mathcal{B}(\mathcal{H}_{code}^\ast))_\ast$. The norm of the channel then corresponds to the norm of the dual channel acting on the predual, which is spanned by normal linear functionals.

We can then likewise utilize the privacy-correctability correspondence for the finite-dimensional case, and we conclude that this setup gives rise to approximate recovery. More formally, our result can be summed up and presented as Theorem \ref{thm:main} below.

\begin{thm:main}
Let $\mathcal{H}_{code}$ and $\mathcal{H}_{phys}$ be two Hilbert spaces, $V:\mathcal{H}_{code}\rightarrow \mathcal{H}_{phys}$ be an isometry, and $\mathcal{H}^\ast_{code}$ be any finite-dimensional Hilbert space of dimension smaller or equal to the one of $\mathcal{H}_{code}$. Let $M_{A}$ be a von Neumann algebra on $\mathcal{H}_{phys}$. To each normal state $\omega$ in $\mathcal{B}(\mathcal{H}_{code}\otimes\mathcal{H}^\ast_{code})_\ast$, we associate two entanglement wedge von Neumann algebras $M_{EW(\omega, A)}$ and $M_{EW(\omega, \overline{A}\cup R)}$ of operators on $\mathcal{H}_{code}\otimes\mathcal{H}_{code}^\ast$, such that $M_{EW(\omega, A)}\subset\mathcal{B}(\mathcal{H}_{code})\otimes Id$ and $M_{EW(\omega, \overline{A}\cup R)}\subset M_{EW(\omega, A)}^\prime$. Let
\begin{align*}
    M_{a}:=\underset{\omega}{\bigcap}\;M_{EW(\omega, A)}
\end{align*} 
be the reconstruction wedge von Neumann algebra on $\mathcal{H}_{code}$, and suppose that $M_{a^\prime}$, the relative commutant of $M_a$ in $\mathcal{B}(\mathcal{H}_{code})\otimes Id$, is a product of type $I$ factors. Suppose that for all choices of $\mathcal{H}^\ast_{code}$ and all pairs of states $\rho,\omega$ in $\mathcal{B}(\mathcal{H}_{code}\otimes\mathcal{H}^\ast_{code})_\ast$ such that $S_{\rho,\omega}(M_{EW(\omega, \overline{A}\cup R)})$ is finite, we have the following modified-JLMS condition: 
\begin{align*}
|S_{\rho\circ (\mathcal{E}^c\otimes Id),\omega\circ (\mathcal{E}^c\otimes Id)}(M_A^\prime\otimes\mathcal{B}(\mathcal{H}_{code}^\ast))-S_{\rho,\omega}(M_{EW(\omega, \overline{A}\cup R)})&\\
+S_{gen}(\rho,EW(\rho, \overline{A}\cup R))-S_{gen}(\rho,EW(\omega, \overline{A}\cup R))|&\ \leq\ \varepsilon,
\end{align*}
where $\mathcal{E}$ and $\mathcal{E}^c$ refer to the respective restrictions of $A\mapsto V^\dagger A V$ to $M_A$ and $M^\prime_A$, and the function $S_{gen}(\rho,EW(\omega, \overline{A}\cup R))$ depends only on the restrictions of $\rho$ and $\omega$ to $M_{a^\prime}\otimes\mathcal{B}(\mathcal{H}_{code}^\ast)$.\footnote{We note that this is more relaxed than restriction conditions to any specific entanglement wedge, i.e. $M_{EW(\omega, \overline{A}\cup R)}$.}
Then, there exists a quantum channel $\mathcal{R}:M_{a}\rightarrow M_{A}$ such that
\begin{align*}
    \|\mathcal{E}\circ\mathcal{R}-Id_{M_{a}}\|_{cb} \leq 2(2\varepsilon)^\frac{1}{4}.
\end{align*}
For finite-dimensional $\mathcal{H}_{code}$, we find a bound $2\sqrt{2\sqrt{2\varepsilon}}$ instead.
\end{thm:main}

In Theorem \ref{thm:main}, the von Neumann algebra $M_a$ corresponds to the observables in the reconstruction wedge of the boundary region $A$. Depending on this boundary region $A$, we defined the reconstruction wedge, as the notion required for nonperturbative gravitational effects, to be the intersection of all entanglement wedges of pure and mixed states on $A$. Then, what this theorem entails physically is that utilizing this new reconstruction wedge we can guarantee state-independent reconstruction. On the contrary, observables that are outside the reconstruction wedge but inside a specific entanglement wedge, cannot be reconstructed on this region $A$ in a state-independent manner. 

There often exists a macroscopic difference between a fixed entanglement wedge in a code subspace and the reconstruction wedge associated to this code subspace, due to the bulk term in the quantum extremal surface formula that relates the boundary entanglement entropy to the bulk generalized entropy of the entanglement wedge. This bulk term can sometimes be dominant in the presence of a large number of gravitons. This shows the importance of jumps of the entanglement wedge in the presence of gravity; it further suggests that the reconstruction wedge is an important object in the holographic dictionary. 

We want to emphasize that this remains true in a generic Hilbert space formalism, both finite and infinite dimensional cases, for the bulk and boundary construction. As the reconstruction wedge is a natural object to consider in the context of quantum extremal surfaces, it would be interesting to determine the boundary dual of the area of the reconstruction wedge.

We now prove Proposition \ref{thm:prop}, which is essential to complete the proof of Theorem \ref{thm:main}.\footnote{We thank Vern Paulsen for communicating ideas of this proof.} We first recall Proposition \ref{thm:prop}.
\begin{thm:prop}
Let $M$ and $N$ be two von Neumann algebras and let $\Phi$ be a $\ast$-preserving normal map. Then, 
\begin{align}
\|\Phi\|_{cb}=\underset{n\in\mathbb{N}}{\mathrm{sup}}\;\underset{\rho_n}{\mathrm{sup}}\|\rho_n\circ(\Phi\otimes Id_n)\|,\end{align}
where $\rho_n$ varies over normal states on $N\otimes M_n(\mathbb{C})$.
\end{thm:prop}
We first prove the two following lemmas that lead to proving this propostition.
\begin{lem}
\label{lem:sa}
Let $M$ be a von Neumann algebra and $\omega$ be a normal $\ast$-preserving linear functional on $M$. Then, $\omega$ attains its norm on a self-adjoint element.
\end{lem}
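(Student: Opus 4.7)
The plan is to apply the normal Jordan decomposition of the Hermitian functional $\omega$ and construct an explicit self-adjoint element of norm at most one on which the supremum is realized. Since $\omega$ is $\ast$-preserving, we have $\omega(A^\ast) = \overline{\omega(A)}$ for all $A \in M$, so $\omega$ is a Hermitian element of the predual $M_\ast$.

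First I would invoke the Jordan decomposition (already used in this paper via \cite{Pedersen}), together with the refinement available in the von Neumann setting: there exist unique positive normal functionals $\omega_+,\omega_- \in M_\ast$ and orthogonal projections $p_+, p_- \in M$ with $p_+ p_- = 0$ such that $\omega = \omega_+ - \omega_-$, the support of $\omega_\pm$ is $p_\pm$, and $\|\omega\| = \|\omega_+\| + \|\omega_-\|$. The key input beyond the bare Jordan decomposition is the orthogonality of the two supports; this is a classical feature of the decomposition of normal Hermitian functionals on a von Neumann algebra (see, e.g., Takesaki, \emph{Theory of Operator Algebras I}, III.4).

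Second, I would consider the self-adjoint symmetry $A := p_+ - p_- \in M$. Since $p_+$ and $p_-$ are orthogonal projections, the spectrum of $A$ lies in $\{-1,0,1\}$, so $\|A\| \leq 1$. Using orthogonality of the supports, $\omega_+(p_-) = 0$ and $\omega_-(p_+) = 0$, and the fact that $\omega_\pm$ is a state on $p_\pm M p_\pm$ of norm $\|\omega_\pm\|$, a direct computation gives $\omega(A) = \omega_+(p_+) + \omega_-(p_-) = \|\omega_+\| + \|\omega_-\| = \|\omega\|$. Hence $\omega$ attains its norm on the self-adjoint element $A$, as required.

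I do not anticipate any serious obstacle: the only ingredient beyond the Jordan decomposition already invoked earlier in the paper is the orthogonal support property of the decomposition in the normal setting, which is standard. Everything else is a one-line verification.
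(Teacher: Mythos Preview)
Your argument is correct, but it takes a different route from the paper. You invoke the normal Jordan decomposition with orthogonal support projections and exhibit an explicit self-adjoint maximizer $p_+ - p_-$ on which $\omega$ literally attains its norm. The paper instead gives a completely elementary trick that makes no use of normality: for any $A\in M$, choose a unimodular $\xi$ with $\xi\,\omega(A)=|\omega(A)|$ and set $H:=\tfrac{1}{2}(\xi A+\bar\xi A^\dagger)$; then $H$ is self-adjoint, $\|H\|\le\|A\|$ by the triangle inequality, and since $\omega$ is $\ast$-preserving one has $\omega(H)=|\omega(A)|$. This shows that the supremum defining $\|\omega\|$ may already be taken over self-adjoint elements, which is all that is actually used in the proof of Proposition~\ref{thm:prop}. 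Your approach has the advantage of genuinely exhibiting a maximizer, but it relies on structure specific to normal functionals on von Neumann algebras; the paper's argument works for any $\ast$-preserving functional on a C$^\ast$-algebra and is a two-line verification.
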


\begin{proof}
Let $A\in M$ and $\xi$ a complex number of modulus 1 such that 
\begin{align}|\omega(A)|=\xi\omega(A).\end{align}
Now let \begin{align}H:=\frac{\xi A+\bar{\xi} A^\dagger}{2}.\end{align}
By the triangle inequality, $H$ is Hermitian of norm smaller or equal to that of $A$. Moreover, as $\omega$ is $\ast$-preserving, \begin{align}\omega(H)=|\omega(A)|.\end{align} This concludes the proof that $\omega$ attains its norm on self-adjoint elements.
\end{proof}

\begin{lem}
\label{lem:satocb}
Let $\Phi:M\longrightarrow N$ be a normal map between von Neumann algebras. Then, \begin{align}\|\Phi\|_{cb}=\underset{n\in\mathbb{N}}{\mathrm{sup}}\;\|\Phi\otimes Id_n\|_{sa},\end{align}where $\|\cdot\|_{sa}$ means that the supremum in the definition of the norm is restricted to self-adjoint elements of $M\otimes\mathcal{M}_n(\mathbb{C})$.
\end{lem}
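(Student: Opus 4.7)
The plan has two directions. The easy inequality $\sup_n \|\Phi \otimes Id_n\|_{sa} \leq \|\Phi\|_{cb}$ is immediate, because restricting the norm supremum to self-adjoint elements only decreases it, and each $\|\Phi\otimes Id_n\|$ is at most $\|\Phi\|_{cb}$ by definition of the completely bounded norm. The content of the lemma is the reverse inequality, which I would establish via the standard $C^\ast$-algebraic self-adjoint dilation trick familiar from the theory of completely bounded maps.

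Concretely, for any $A \in M \otimes \mathcal{M}_n(\mathbb{C})$ I would form
$$\tilde{A} := \begin{pmatrix} 0 & A \\ A^\dagger & 0 \end{pmatrix} \in M \otimes \mathcal{M}_{2n}(\mathbb{C}),$$
which is manifestly self-adjoint and satisfies $\|\tilde{A}\| = \|A\|$ via the identity $\tilde{A}^2 = \mathrm{diag}(AA^\dagger,\,A^\dagger A)$ inside the $C^\ast$-algebra $M\otimes \mathcal{M}_{2n}(\mathbb{C})$. Using that $\Phi$ is $\ast$-preserving, a hypothesis inherited from the enclosing Proposition \ref{thm:prop}, the image $(\Phi \otimes Id_{2n})(\tilde{A})$ has antidiagonal blocks $(\Phi \otimes Id_n)(A)$ and $(\Phi\otimes Id_n)(A)^\dagger$, so it is again self-adjoint with norm exactly $\|(\Phi \otimes Id_n)(A)\|$. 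Combining these two observations gives $\|(\Phi \otimes Id_n)(A)\| \leq \|\Phi \otimes Id_{2n}\|_{sa}\,\|A\|$, whence $\|\Phi \otimes Id_n\| \leq \|\Phi \otimes Id_{2n}\|_{sa}$, and taking the supremum in $n$ yields $\|\Phi\|_{cb} \leq \sup_n \|\Phi \otimes Id_n\|_{sa}$, which is the desired inequality.

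The only non-routine point, and the step where I expect to have to be careful, is the appeal to $\ast$-preservation of $\Phi$: without it, $(\Phi \otimes Id_{2n})(\tilde{A})$ need not be self-adjoint, and its norm would not collapse back to $\|(\Phi \otimes Id_n)(A)\|$, so the reduction to the $\|\cdot\|_{sa}$ norm fails. Since the ambient Proposition \ref{thm:prop} does assume $\ast$-preservation this step is free; normality of $\Phi$ plays no role in this particular lemma and is only needed downstream, when passing from the operator-norm bound to the predual reformulation in Proposition \ref{thm:prop} via Lemma \ref{lem:sa}.
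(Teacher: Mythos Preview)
Your proposal is correct and follows essentially the same route as the paper: both use the off-diagonal self-adjoint dilation $\tilde{A}=\begin{pmatrix}0 & A\\ A^\dagger & 0\end{pmatrix}$ to bound $\|\Phi\otimes Id_n\|$ by $\|\Phi\otimes Id_{2n}\|_{sa}$ and then take the supremum. One small remark: your caveat that $\ast$-preservation is essential here is overcautious; the paper computes $\|(\Phi\otimes Id_2)(B)\|=\max(\|\Phi(A)\|,\|\Phi(A^\dagger)\|)$ directly from the block structure, which already dominates $\|\Phi(A)\|$ without assuming $(\Phi\otimes Id_2)(B)$ is self-adjoint, so the lemma holds for arbitrary linear $\Phi$.
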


\begin{proof}
Let $A\in M$, and define \begin{align}B:=\begin{pmatrix}0 & A\\ A^\dagger & 0 \end{pmatrix}\in M\otimes M_2(\mathbb{C}).\end{align}$B$ is self-adjoint, and \begin{align}\|B\|=\|A\|.\end{align}Moreover, \begin{align}\|(\Phi\otimes Id_2)(B)\|=\mathrm{max}(\|\Phi(A)\|,\|\Phi(A^\dagger)\|).\end{align} This last equality shows that we have the upper and lower bounds \begin{align}\|\Phi\|\leq\|\Phi\otimes Id_2\|_{sa}\leq\|\Phi\otimes Id_2\|.\end{align} Iterating the tensor product with $Id_2$ and taking the supremum, we get \begin{align}\|\Phi\|_{cb}=\underset{n\in\mathbb{N}}{\mathrm{sup}}\;\|\Phi\otimes Id_n\|_{sa}.\end{align}\end{proof}

With these two lemmas in hand, we are ready to prove Proposition \ref{thm:prop}.  By Lemma \ref{lem:satocb}, for $n\in\mathbb{N}$, we have that 
\begin{align}
    \|\Phi\|_{cb}=\underset{n\in\mathbb{N}}{\mathrm{sup}}\underset{A_n\;\text{self-adjoint},\;\|A_n\|=1}{\mathrm{sup}}\|(\Phi\otimes Id_n)(A_n)\|.
\end{align} 
We note that the $(\Phi\otimes Id_n)(A_n)$ are all self-adjoint (as $\Phi$ is $\ast$-preserving), and that the norm of a self-adjoint operator in a von Neumann algebra can be obtained by taking the supremum of its values against normal states. Hence, we have  
\begin{align}
\|\Phi\|_{cb}=\underset{n\in\mathbb{N}}{\mathrm{sup}}\underset{A_n\;\text{self-adjoint},\;\|A_n\|=1}{\mathrm{sup}}\underset{\rho_n}{\mathrm{sup}}\;|\rho_n((\Phi\otimes Id_n)(A_n))|.
\end{align} 
This precisely yields 
\begin{align}
\|\Phi\|_{cb}=\underset{n\in\mathbb{N}}{\mathrm{sup}}\;\underset{\rho_n}{\mathrm{sup}}\;\|\rho_n\circ(\Phi\otimes Id_n)\|_{sa}.
\end{align}
We now use Lemma \ref{lem:sa} to drop the self-adjoint condition on the right hand side:
\begin{align}
\|\Phi\|_{cb}=\underset{n\in\mathbb{N}}{\mathrm{sup}}\;\underset{\rho_n}{\mathrm{sup}}\;\|\rho_n\circ(\Phi\otimes Id_n)\|,
\end{align}
which concludes the proof of Proposition \ref{thm:prop}.

\section{State-dependent recovery beyond the reconstruction wedge}

The reconstruction result we derived thus far is completely state-independent: operators inside the reconstruction wedge are inside the entanglement wedge of any pure or mixed state, and we saw that infinite-dimensional privacy/correctability duality allowed to derive state-independent bounds on correctability from a modified version of the JLMS formula. 

However, some observables that reach deeper than the reconstruction wedge (for example, the black hole interior), cannot be reconstructed in such a state-independent fashion \cite{Papadodimas:2012aq,Papadodimas:2013jku}. Instead, their boundary representatives are \textit{state-dependent}, and this feature has been at the center of many discussions, including understandings of finite \cite{Hayden:2018khn} to infinite-dimensional \cite{Gesteau:2020rtg} Hilbert space settings. In particular, it provides a lot of tools for the resolution of the information paradox \cite{Penington:2019npb}. 

\subsection{$\alpha$-bits}

Discussions on state-dependence are rooted in the works of Papadodimas--Raju \cite{Papadodimas:2012aq,Papadodimas:2013jku}, but here we will focus on the modern approach to the problem by Hayden--Penington, through the notion of $\alpha$-bits \cite{Hayden:2018khn}. The main idea is that only subspaces whose dimension grows like a certain power of the dimension of the whole code subspace can be recovered inside a black hole. 

We first recall the setup of \cite{Hayden:2018khn}, and show how the arguments can be generalized to infinite-dimensions. The idea is to consider a finite-dimensional code subspace, that factorizes into a black hole and an exterior piece 
\begin{align}
    \mathcal{H}=\mathcal{H}_{BH}\otimes\mathcal{H}_{ext},
\end{align}
with the area-dependence given by
\begin{align}
    \underset{G_N\rightarrow 0}{\mathrm{lim}}G_N\,\mathrm{log}\,d_{BH}=\frac{\mathcal{A}_0}{4},
\end{align}
where $\mathcal{A}_0$ is the black hole area. Then, two bulk regions are considered, both anchored in the same boundary region $A$, of respective areas $\mathcal{A}_1$ and $\mathcal{A}_2$. The region $\mathcal{A}_1$ corresponds to the entanglement wedge of $A$ for a pure boundary state and contains the black hole, whereas the region $\mathcal{A}_2$ corresponds to the entanglement wedge of a thermal boundary state and does not contain the black hole. We then assume that 
\begin{align}
    \alpha:=\frac{\mathcal{A}_2-\mathcal{A}_1}{\mathcal{A}_0}
\end{align}
is strictly smaller than 1. We also assume that the geometric and matter contributions are small outside of the black hole, then it is argued in \cite{Hayden:2018khn} that there are only 2 possible quantum extremal surfaces for the entanglement wedge, of respective areas $\mathcal{A}_1$ and $\mathcal{A}_2$. In order to be able to get a state-independent recovery for the black hole, one then needs the second quantum extremal surface to always be dominant: this is the case when the dimension of the subspace we wish to decode satisfies 
\begin{align}
    d\leq e^{\alpha \frac{\mathcal{A}_0}{4G_N}},
\end{align}
which is strictly less than the dimension of the black hole Hilbert space. Hence, the previous reasoning only shows that the black hole interior reconstruction is only possible in a state-dependent way. It turns out this $d\leq e^{\alpha \frac{\mathcal{A}_0}{4G_N}}$ is the best possible bound for the dimension of a subspace whose reconstruction can be achieved in a state-independent manner \cite{Hayden:2018khn}. 

Let us generalize this argument to the case of an infinite-dimensional boundary Hilbert space. Suppose that the black hole Hilbert space is entangled with a Hilbert space $\mathcal{H}_r$ of dimension $d^\prime$ given by
\begin{align}
    d'=e^{\alpha^\prime \frac{\mathcal{A}_0}{4G_N}},\quad\text{for}\quad\alpha^\prime>\alpha .
\end{align}
This Hilbert space $\mathcal{H}_r$ corresponds to a reference system that takes into account all the degrees of freedom the black hole can be entangled to - for example, Hawking radiation.
We can then, by our previous reasoning, construct a subspace 
\begin{align}
    \mathcal{H}_S\subset\mathcal{H}_{code}\otimes\mathcal{H}_{r}
\end{align}
such that $(M_{a^\prime}\otimes\mathcal{B}(\mathcal{H}_r))\cap\mathcal{B}(\mathcal{H}_S)$ can be reconstructed from $M_A^\prime\otimes\mathcal{B}(\mathcal{H}_r)$ up to nonperturbative error. Indeed, perturbing outside of the black hole but still between the two candidate quantum extremal surfaces cannot change the quantum extremal surface, so it suffices to perturb in such a way around a state which is maximally entangled between the black hole and the reference system in order to construct a subspace of states whose entanglement wedge contains the black hole. Note that if the dimension of $\mathcal{H}_S$ is chosen to be small enough, this is still true even for states that are entangled with a second reference system whose dimension equals the one of $\mathcal{H}_S$. In other words, following the proof of Theorem 5.1, $(M_{a^\prime}\otimes \mathcal{B}(\mathcal{H}_r))\cap\mathcal{B}(\mathcal{H}_S)$ is $\delta$-correctable for $M_A^\prime\otimes\mathcal{B}(\mathcal{H}_r)$ for some nonperturbatively small $\delta>0$, and also $\delta^\prime$-private for $M_A\otimes Id$ for some nonperturbatively small $\delta^\prime>0$ by privacy/correctability duality. This shows the impossibility to recover the black hole interior for any $\alpha^\prime>\alpha$.

\subsection{Universal recovery channel}
\label{sec:recoverychannel}

It is proven in a finite-dimensional context that a universal recovery channel, known as the \textit{twirled Petz map}, can be used to recover any state in a subspace for which the JLMS formula is satisfied \cite{Cotler:2017erl}. This statement was later extended to the infinite-dimensional case by Faulkner, Hollands, Swingle and Wang \cite{Faulkner:2020iou,Faulkner:2020kit}. We use their ideas and rephrase it in the following form for our purposes.

\begin{thm}
\label{thm:universalcorrect} 
Let $\mathcal{E}:M\longrightarrow N$ be a unital quantum channel between two von Neumann algebras $M$ and $N$. Let $\rho,\omega$ be two normal states on $N$ such that $\omega$ is faithful. Then, there exists a channel $\alpha: N\longrightarrow M$, that just depends on $\omega$ and $\mathcal{E}$ such that
\begin{align}
S_{\rho,\omega}(N)-S_{\rho\circ \mathcal{E},\omega\circ \mathcal{E}}(M)\geq \frac{1}{4} \|\rho\circ \mathcal{E} \circ \alpha - \rho\|^2.
\end{align}
\end{thm}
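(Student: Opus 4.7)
The plan is to follow the strategy of Junge--Renner--Sutter--Wilde--Winter in finite dimensions, as extended to general von Neumann algebras by Faulkner--Hollands--Swingle--Wang, and then convert the resulting recoverability bound on fidelity into a bound on the trace norm. The recovery channel $\alpha:N\rightarrow M$ is taken to be the twirled Petz map of the pair $(\omega,\mathcal{E})$: one forms the standard Petz map $\mathcal{P}_{\omega,\mathcal{E}}$ from $\omega$, rotates it by the modular groups of $\omega$ and $\omega\circ\mathcal{E}$ to obtain a one-parameter family $\mathcal{P}^{t}_{\omega,\mathcal{E}}$ of normal unital completely positive maps, and averages this family against the probability density $\beta_{0}(t)=\pi/(1+\cosh(2\pi t))$ on $\mathbb{R}$. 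The resulting $\alpha$ depends only on $\omega$ and $\mathcal{E}$, as required. The cleanest setting for the construction is the Haagerup standard form, in which all normal states on a von Neumann algebra are realised by vectors in a common Hilbert space, and the modular and relative modular objects from Section \ref{sec:preliminaries} are available as (generally unbounded) operators acting on that space.

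The core of the argument is the sharpened data processing inequality
\begin{equation*}
S_{\rho,\omega}(N)-S_{\rho\circ\mathcal{E},\,\omega\circ\mathcal{E}}(M)\ \geq\ -2\log F\bigl(\rho,\,\rho\circ\mathcal{E}\circ\alpha\bigr),
\end{equation*}
where $F$ denotes the Uhlmann fidelity between normal states on $N$. I would establish it by a complex interpolation argument. Concretely, one assembles an operator-valued holomorphic function $Z(z)$ on the strip $0\leq\mathrm{Re}(z)\leq 1$ out of $\Delta_{\rho|\omega}^{z}$, $\Delta_{\rho\circ\mathcal{E}|\omega\circ\mathcal{E}}^{-z}$, and the Stinespring dilation of $\mathcal{E}$. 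On the line $\mathrm{Re}(z)=0$ the logarithm of $\|Z(z)\|$ is controlled by the Araki relative entropy difference through Kosaki's variational formula, while on $\mathrm{Re}(z)=1$ it reproduces a matrix element proportional to the fidelity between $\rho$ and its twirled-Petz recovery $\rho\circ\mathcal{E}\circ\mathcal{P}^{t}_{\omega,\mathcal{E}}$. Applying the Stein--Hirschman three-lines theorem to $\log\|Z(\cdot)\|$ and integrating against $\beta_{0}$ interpolates these two endpoints and yields the displayed inequality.

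Two elementary estimates then convert the fidelity bound into the one stated in the theorem. The Fuchs--van de Graaf inequality gives $\|\rho-\rho\circ\mathcal{E}\circ\alpha\|^{2}\leq 4(1-F^{2})$, and the bound $-\log x\geq 1-x$ for $x\in(0,1]$ applied to $x=F^{2}$ yields $-2\log F\geq 1-F^{2}\geq \tfrac{1}{4}\|\rho-\rho\circ\mathcal{E}\circ\alpha\|^{2}$; chaining these with the sharpened data processing step completes the proof. The main obstacle is the interpolation step when $M$ or $N$ is of type $III$: the modular operators are unbounded and typically have purely continuous spectrum, so $Z(z)$ must be defined through a bounded functional calculus cutoff, and the analyticity, continuity, and boundary norm estimates must be justified before removing the cutoff in the $\sigma$-weak limit. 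A further subtlety is the removal of any faithfulness hypothesis on $\rho$, which is handled using the Connes cocycle derivatives $[D\rho:D\omega]_{t}$ from Section \ref{sec:preliminaries} to give unambiguous meaning to $\Delta_{\rho|\omega}^{z}$ on an appropriate domain.
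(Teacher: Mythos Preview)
Your proposal is correct and follows precisely the Faulkner--Hollands--Swingle--Wang argument that the paper itself invokes: the paper does not supply an independent proof of Theorem~\ref{thm:universalcorrect} but states it as a rephrasing of the result in \cite{Faulkner:2020iou,Faulkner:2020kit}, which in turn adapts the Junge--Renner--Sutter--Wilde--Winter strategy to general von Neumann algebras. Your outline---twirled Petz map from modular-rotated Petz maps averaged against $\beta_0$, the Stein--Hirschman interpolation on the strip to obtain the $-2\log F$ data-processing refinement, and the Fuchs--van de Graaf conversion to the norm bound---is exactly that approach, so there is nothing to contrast.
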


We first explain how this Theorem \ref{thm:universalcorrect} is relevant to our context, without going into details. For our setup, the von Neumann algebras $M$ and $N$ act respectively on the boundary Hilbert space and on the code subspace, and $\mathcal{E}$ represents a boundary-to-bulk map for operators. Note that if the left hand side of this inequality is small, then it means that the twirled Petz map $\alpha$ is a good choice of recovery channel. This is true when the JLMS condition is satisfied; we emphasize a crucial subtlety that this is different from the modified-JLMS condition. Hence, the twirled Petz map will be a good choice of recovery map only for states that have the same entanglement wedge, for example when the entanglement wedge doesn't jump and coincides with the reconstruction wedge, or in $\alpha$-bit spaces.

Hence, theorem \ref{thm:universalcorrect} shows that we may be able to use the twirled Petz map to decode bulk operators in AdS/CFT, both for the reconstruction wedge (with an appropriate choice of $\omega$), and for $\alpha$-bit subspaces of the code subspace. However, it is important to emphasize that while the privacy/correctability duality provides a robust proof of bulk reconstruction with nonperturbatively small error in $G_N$ in the reconstruction wedge, it is still unclear whether an argument based on the twirled Petz map will be exact at all orders in perturbation theory \cite{Hayden:2018khn}. Moreover, the bound we get is only in terms of the operator norm, and not in terms of the completely bounded norm. Thus, this type of approximate reconstruction is much weaker than the one we derived through privacy/correctability duality.

\section{Discussion}
\label{sec:discussion}

We developed in this paper an operator algebraic framework that includes nonperturbative gravity corrections to entanglement wedge reconstruction. We described it by first formulating the \emph{exact} entanglement wedge reconstruction via Theorem \ref{thm:summary} that summarizes known results, and investigated the conditional expectation structure of the exact holographic map to form the \emph{approximate} setting of entanglement wedge reconstruction.

While the exact setting has multiple shortcomings due to gravitational aspects, the approximate nature of bulk reconstruction provides vast important implications. It is therefore crucial to generalize the infinite-dimensional picture \cite{Kang:2019dfi,Gesteau:2020rtg,Faulkner:2020hzi} to the approximate setting. We achieve this as portrayed in Theorem \ref{thm:main}, which shows that under a modified-JLMS assumption for the code subspace tensored with a reference system, the local algebra of the \textit{reconstruction wedge} of a boundary region can be approximately reconstructed in a state-independent and dimension-independent manner, as long as its bulk complement is a product of type $I$ factors. In particular, this is the case for finite-dimensional code subspaces.  

With this framework, we further incorporated beyond the state-independent settings, which requires going beyond the reconstruction wedge. As shown in \cite{Hayden:2018khn}, reconstruction beyond the reconstruction wedge may only be state-dependent, as only the $\alpha$-bits of the black hole can be reconstructed on the boundary. We showed that our formalism is still enough to account for state-dependent reconstructions with an infinite-dimensional boundary, by resorting to privacy/correctability duality and our main theorem. Furthermore, we pointed out that the universal recovery channel known as the `twirled Petz map' achieves approximate recovery in both the reconstruction wedge and $\alpha$-bit cases, whenever recovery is possible, but only at the order $G_N^0$ and in the sense of the operator norm rather than the completely bounded norm.

On top of formulating new approximation results both in the state-dependent and in the state-independent case for an arbitrary boundary Hilbert space, this paper can also be seen as containing most known results on exact and approximate entanglement wedge reconstruction as particular cases. Here, we intended to present their most general formulation, which would be useful in the case of an actual CFT on the boundary. This allowed us to construct an exhaustive dictionary between physical notions relevant to holography and operator-algebraic concepts.

We expect our results to be physically relevant at least whenever the boundary theory can be reasonably well-described in terms of algebraic quantum field theory (AQFT). This approach, based on the Haag--Kastler axioms \cite{HaagKastler}, has several shortcomings, but has proven to be quite useful, in particular for modelling 2d Lorentzian conformal field theories, thanks to the framework of conformal nets \cite{KawahigashiLongo}. In examples of AdS${_3}$/CFT${_2}$ where the boundary is described by conformal nets, our results should in particular apply. We also expect that in higher-dimensional examples of holography, at least some of the structures and basic mechanisms uncovered here should survive. We hope to make these statements more precise in the future.

Another interesting place to test out this operator algebraic formulation would be in (limits of) random tensor networks, where error-correction becomes approximate. This, however, would require extending our theorem to the case of non-isometric maps, which we leave to future research.

We emphasize that proving entanglement wedge reconstruction results in AdS/CFT within this operator-algebraic framework would increase the level of rigor of current derivations, which rely on the Euclidean gravitational path integral. In low dimensions, the Euclidean path integral is surprisingly powerful in capturing a lot of the UV physics of the CFT, and proving results such as the quantum extremal surface formula or the island prescription. However, this path integral is thought to capture the physics of an average of gravitational theories, rather than a single one. Hence, if we want to rigorously show entanglement wedge reconstruction for a single boundary theory, we will need to tackle the problem directly within this operator-algebraic setup. Moreover, in dimension 4 or higher, it is unclear whether the gravitational path integral will be a good enough tool to capture the effects described in this paper. Even more so in that context, it is a very important problem to understand the relation between the semiclassical effective field theory and the UV complete one directly at the level of the operator algebras describing the CFT.

Further, we hope to draw a precise link between our operator-algebraic framework and geometry, and to maybe be able to derive the bulk Einstein equation in the spirit of \cite{Lewkowycz:2018sgn}. In particular, recent work \cite{Bousso:2020yxi} has shown that the Connes cocycle, a purely operator-algebraic quantity on the boundary, has a precise geometric interpretation in the bulk as a kink transform. Given that in the exact framework, the Connes cocycle is preserved under the holographic conditional expectation, it would be interesting to study it in our approximate setting, and try to formulate a more precise correspondence between operator algebraic quantities on the boundary and geometric data in the bulk. In particular, this could allow one to understand the bulk dynamics and the bulk Lorentzian structure.

Another interesting research direction is to try to study in detail how some apparent paradoxes raised by the framework of exact entanglement wedge reconstruction are resolved by going to the approximate case. This was done in finite-dimensions by Penington \cite{Penington:2019npb} in the case of the information paradox, but for instance, the tension between the lack of additivity of entanglement wedges and the boundary Reeh-Schlieder theorem, first uncovered by Kelly \cite{Kelly:2016edc} and precisely stated in the operator-algebraic context by Faulkner \cite{Faulkner:2020hzi}, still seems quite mysterious, and can only be treated in an infinite-dimensional setting. Ultimately, we hope that our approximate statements for infinite-dimensional boundary Hilbert spaces will be a first step towards a fully consistent formulation of the emergence of the bulk from the entanglement structure of the CFT.

We now conclude our analysis by giving a more detailed interpretation of our modified-JLMS condition, and explaining how our results are related to the recent progress on the information paradox.

\subsection{Generalized entropies for quantum extremal islands}

Considering two states in the same code subspace, we know that their entanglement wedges may be different \cite{Dong:2016eik}. In other words, there is a difference in the term regarding generalized entropy which effectively imposes changes in the original JLMS formula, that corresponds to the reconstruction of a single entanglement wedge. To take these contributions into account, we consider instead the modified-JLMS condition. 

For the modified-JLMS formula, not only did we have to assume the relative entropy conservation between the bulk and the boundary, but we also needed the formula to hold for the states supported on the reference system $\mathcal{H}_{code}^\ast$. In fact, this is because we had to define the map $\mathcal{E}^c\otimes Id$ in our proof. We use the privacy and correctability duality theorem in our proof which can only hold for the completely bounded norm, whose use requires adding an extra reference system. 

Due to this setup with an extra reference system, we can apply our framework to various different physical settings. The entanglement wedge of $\mathcal{H}_{code}^\ast$ can be nontrivial and significantly change our description of bulk reconstruction. In some cases, though not always, this auxiliary system can be identified with the space of semiclassical states of the Hawking radiation or of another boundary.

The idea that black hole interior degrees of freedom can be encoded in an auxiliary system traces back to the identification between wormholes and thermofield double states \cite{Maldacena:2013xja}, which is a well suited model to think about holographic entanglement. In this setup with the auxiliary system, the black hole interior simply becomes the other side of a Lorentzian wormhole. It follows that the entanglement wedge of the second boundary corresponds to what would be the black hole interior from the viewpoint of the first boundary. In particular, the modified-JLMS formula will be valid for states, in the code subspace, supported on both boundaries. 

The necessity to consider the reconstruction wedge and the state-dependence only appears when nonperturbative gravitational corrections to entanglement wedge reconstruction are taken into account. These nonperturbative corrections, although individually insignificant, may pile up in the presence of a nontrivial amount of gravitational corrections until they significantly alter the encoding of bulk information. This is, for example, what happens for black hole interior reconstruction from the Hawking radiation after the Page time. Taking into account a conformal field theory bath, it is possible to have nontrivial nonperturbative effects from replica wormholes to form quantum extremal islands for the semiclassical bulk. Then, the interior of an island is encoded in the early Hawking radiation of the black hole and the entropy of the radiation is computed from the quantum island formula, which generalizes the quantum extremal surface prescription. Applying this formula leads to a coherent semiclassical derivation of the Page curve.

Replica wormhole calculations \cite{Almheiri:2019qdq, Penington:2019kki} considerably strengthen the evidence for the validity of such an island formula. It would be very interesting to derive a justification for it directly within the boundary theory, without resorting to the gravitational path integral.

\subsection{Application to the information paradox}

In this paper, we explained how to formulate approximate entanglement wedge reconstruction for an infinite-dimensional boundary Hilbert space in the case where the boundary is divided into two regions $A$ and $\overline{A}$. We obtained that in the case where there is a black hole in the bulk, the quantum extremal surface associated to the boundary region $A$ experiences jumps between pure and mixed states. This allows a state-dependent reconstruction of the interior. 

As outlined in \cite{Penington:2019npb}, one can perform a similar analysis in the case of an evaporating black hole, whose semiclassical description is encoded in part in the CFT and in part in a bath of Hawking radiation collected during evaporation. In order to do this, one must be able to collect Hawking radiation by imposing transparent boundary conditions on the AdS boundary. 

In this case, the whole boundary plays the role of the boundary region $A$, while the Hawking radiation plays the role of $\overline{A}$. It can be shown that after the Page time, the radiation bath contains enough information so that a quantum extremal surface that contains most of the black hole interior appears, allowing the interior to be inside the entanglement wedge of the bath. However, the reconstruction of interior operators is state-dependent for a sufficiently later time after the Page time, as the quantum extremal surface disappears when the bulk system is in an overall mixed state. It would be interesting to apply our framework to the setting where the UV-complete dual contains a bath of Hawking radiation in addition to the CFT. We expect the difficulty comes from the fact that the bulk-to-boundary map is only approximately an isometry after the Page time.

\subsection{The gravitational path integral and quantum error-correction}

The most convincing argument for the Page curve in holographic theories in the literature has been derived through the use of the Euclidean gravitational path integral, by including contributions coming from nontrivial topologies \cite{Almheiri:2019qdq,Penington:2019kki}. Even for the regular quantum extremal surface formula, the Euclidean path integral is our best justification \cite{Dong:2017xht}. However, it is still not completely understood why the Euclidean path integral knows so much about the UV degrees of freedom of the theory. Recent calculations in two and three dimensions have put forward a possible interpretation, which can be made rigorous in the case of Jackiw--Teitelboim gravity \cite{Saad:2019lba} or Narain CFTs \cite{Maloney:2020nni}: the Euclidean path integral would only allow to calculate \emph{averaged} quantities over a moduli space of dual theories, rather than quantities in a single unitary theory. For example, this interpretation may provide a resolution to the factorization problem.

However, it is still very important to understand how semiclassical calculations work in a single unitary holographic theory. Indeed, if we are to describe a non-self-averaging observable in the bulk, it will be sensitive to the specific features of individual theories that compose the ensemble of boundary theories; importantly, calculating the gravitational path integral will not likely suffice. Moreover, in four or higher dimensions, it is unclear whether the gravitational path integral will be as successful, even if it is possible to think of an ensemble of boundary theories at all. For instance, $\mathcal{N}=4$ super Yang--Mills is thought to be a unique boundary theory.

The operator algebraic approach does not a priori rely on any kind of averaging procedure or any explicit path integral formula, but on a bulk-to-boundary map between the code subspace and the boundary Hilbert space. In that sense, it allows for an understanding of the semiclassical limit of a single unitary holographic theory. With our approximation theorem in hand, this approach can also handle the nonperturbative corrections that appear through nontrivial topologies in the gravitational path integral. In particular, we showed that even in infinite dimensions, which is the relevant setting for quantum field theories, approximate recovery is possible inside the reconstruction wedge. For local operators that are in the entanglement wedge of a given state, but not contained in all entanglement wedges of pure and mixed states, reconstruction can only be performed in a state-dependent manner. The fact that this reconstruction wedge is strictly smaller than some entanglement wedges shows that gravity affects some large region of the bulk that we tackle in our operator-algebraic framework.

The relationship between the gravitational path integral and ensembles has recently been investigated in detail, and given a new interpretation through the notion of Hilbert space of closed universes. This is relevant to considering the bulk with multi-boundaries. This Hilbert space describes possible nucleations of baby universes in the bulk, which correspond to new asymptotic boundaries \cite{Marolf:2020xie,Gesteau:2020wrk}. In this approach, the Euclidean path integral becomes dependent on the state of these baby universes, which are described by a commutative algebra of observables at infinity. In the Hartle--Hawking state of baby universes, the Euclidean path integral is dual to an ensemble of boundary theories, while in some very specific baby universe states, known as $\alpha$-states, which correspond to the basic superselection sectors of the theory, it actually computes observables in a single member of the dual ensemble. 

It is then interesting to ask whether this distinction between $\alpha$-states and ensemble averaging can be understood within the operator-algebraic framework of quantum error-correction, and more generally, how the notion of ensemble averaging can be encoded within our approach. Such a result would clarify the link between the Euclidean path integral and unitary theories, and explain why the Euclidean path integral works so well to understand some subtle features of gravity. Moreover, it should shed light on the Lorentzian structure of the bulk, as the operator-algebraic framework can be well-adapted to Lorentzian signature. We hope to return to these questions in future work.

\section*{Acknowledgments}
The authors are grateful to Juan Felipe Ariza Mejia, Daniel Harlow, Matilde Marcolli, and Vern Paulsen for discussions.
M.J.K. is supported by a Sherman Fairchild Postdoctoral Fellowship, the U.S. Department of Energy, Office of Science, Office of High Energy Physics, under award number DE-SC0011632, and the National Research Foundation of Korea (NRF) grants NRF-2020R1C1C1007591 and NRF-2020R1A4A3079707.
E.G. would like to thank Matilde Marcolli for her guidance and constant support.

\bibliography{monica}

\end{document}